\documentclass[journal,onecolumn]{IEEEtran}
\usepackage[T1]{fontenc}
\usepackage{bbm}
\usepackage{hyperref}
\usepackage{pgfplots}
\usepackage{tikz}

\usetikzlibrary{shapes,shadows,arrows}
\usepackage{mathtools,cuted}
\usepackage{latexsym}
\usepackage{blkarray}
\usepackage{pictex}
\usepackage{fancyvrb}
\usepackage{fancyhdr}
\usepackage{color}
\usepackage{xfrac}
\usepackage{subfig}
\usepackage{rawfonts}
\usepackage{graphics}
\usepackage{graphicx}
\usepackage{amssymb,amsmath,amsthm,amsfonts}
\usepackage{ifthen}
\usepackage{bbm}
\usepackage{cite}
\usepackage{enumerate}
\usepackage{algorithm}
\usepackage[]{algpseudocode}
\usepackage{verbatim}
\usepackage{balance}

\newtheorem{theorem}{Theorem}

\newtheorem{lemma}[theorem]{Lemma}

\newtheorem{remark}{Remark}
\newtheorem{definition}[theorem]{Definition}

\makeatletter
\def\BState{\State\hskip-\ALG@thistlm}
\makeatother



\hyphenation{op-tical net-works semi-conduc-tor}

\begin{document}

\title{Secure Distributed Matrix Computation with Discrete Fourier Transform}

\author{
	\IEEEauthorblockN{Nitish Mital, Cong Ling and Deniz G\"{u}nd\"{u}z \\ 
	\IEEEauthorblockA{Department of Electrical \& Electronics Engineering, Imperial College London}\\
	Email: \{n.mital,c.ling,d.gunduz\}@imperial.ac.uk}
}

\maketitle

\begin{abstract}
We consider the problem of secure distributed matrix computation (SDMC), where a \textit{user} queries a function of data matrices generated at distributed \textit{source} nodes. We assume the availability of $N$ honest but curious computation servers, which are connected to the sources, the user, and each other through orthogonal and reliable communication links. Our goal is to minimize the amount of data that must be transmitted from the sources to the servers, called the \textit{upload cost}, while guaranteeing that no $T$ colluding servers can learn any information about the source matrices, and the user cannot learn any information beyond the computation result. We first focus on secure distributed matrix multiplication (SDMM), considering two matrices, and propose a novel polynomial coding scheme using the properties of finite field discrete Fourier transform, which achieves an upload cost significantly lower than the existing results in the literature. We then generalize the proposed scheme to include straggler mitigation, and to the multiplication of multiple matrices while keeping the input matrices, the intermediate computation results, as well as the final result secure against any $T$ colluding servers. We also consider a special case, called computation with own data, where the data matrices used for computation belong to the user. In this case, we drop the security requirement against the user, and show that the proposed scheme achieves the minimal upload cost. We then propose methods for performing other common matrix computations securely on distributed servers, including changing the parameters of secret sharing, matrix transpose, matrix exponentiation, solving a linear system, and matrix inversion, which are then used to show how arbitrary matrix polynomials can be computed securely on distributed servers using the proposed procedure.
\end{abstract}
\let\thefootnote\relax\footnotetext{This work was supported in part by the European Union`s H2020 research and innovation programme under the Marie Sklodowska-Curie Action SCAVENGE (grant agreement no. 675891), and by the European Research Council (ERC) Starting Grant BEACON (grant agreement no. 677854).}

\section{Introduction}

In the era of big data, performing computationally intensive operations locally on a single machine is infeasible, and clients often rely on powerful cloud servers to carry out demanding computation tasks. In the so-called \textit{serverless computing paradigm}, clients can request computationally expensive tasks to be performed on massive datasets, potentially generated at multiple geographically distributed locations, using special purpose computing servers (eg., Amazon Web Services (AWS), Microsoft Azure, Google Cloud). 
While serverless computing provides significant flexibility and speed up, it also leads to growing data privacy concerns, as the corporations that provide computation services also provide many other digital services, and have access to unprecedented amounts of private user data. 
Therefore, algorithms that would allow users to benefit from powerful untrustworthy servers while keeping their data private are of significant interest. 

Our goal in this paper is to design efficient secure distributed matrix computation (SDMC) algorithms, which keep data private from the potentially colluding computing servers as well as the entities requesting the computations.

We consider $\Gamma \geq 1$ data \textit{sources}, represented as matrices $\mathbf{A}^{(1)}, \ldots, \mathbf{A}^{(\Gamma)}$ on an appropriate finite field. A \textit{user} desires to compute a function of these matrices, $G(\mathbf{A}^{(1)}, \ldots, \mathbf{A}^{(\Gamma)})$, with the help of $N$ computing servers. The servers are connected to the sources and to each other with orthogonal and reliable links. Similarly, computations carried out by the servers are conveyed to the user over orthogonal and reliable communication links. For a given number of $N$ servers, our goal will be to minimize the amount of data that must be uploaded from the sources to the servers, which we refer to as the \textit{upload cost}. The upload cost often determines the financial cost of serverless computing, but minimizing it would also reduce the overall computing time as it limits the amount of computations that must be carried out by the servers, as well as the communication latency from the sources to the servers, which may be prohibitive especially when the data sources are geographically distant from the servers. For example, the source nodes may be distant hospitals sharing medical data of patients, and the user may be a research institute or a pharmaceutical company making certain queries on the data. In addition to correct computation of the request, we also want to guarantee the privacy of the input data against the servers as well as the requesting user. We impose information theoretic perfect privacy guarantees such that any $T$ colluding servers must not learn anything about the data sources, or the user must not learn anything about the data sources apart from the computation result. We assume that all the servers are honest and responsive, but curious, which means that they follow the prescribed protocol honestly, but any $T$ of them may collude to try to deduce information about the input matrices. We will also consider the special setting of \textit{computation with own data}, in which case the user wants to compute a function on its own data matrices using the available computing servers. In this case we drop the privacy requirement against the user, and the problem lends itself to further optimization. 

We will first focus on the secure distributed matrix multiplication (SDMM) problem, which has received significant recent interest. Large scale matrix multiplication is a fundamental building block of matrix computations in many machine learning, optimization, and signal processing algorithms. It is also one of the most computationally intensive operations. Moreover, it can be easily distributed across multiple servers thanks to its inherently parallel structure. We will first consider the multiplication of $\Gamma =2$ matrices, which will allow us to introduce the main ideas behind our design. We then extend our analysis to the multiplication of multiple matrices, as well as to other fundamental matrix operations, which, when combined with matrix multiplication, allow computation of arbitrary polynomials of matrices.

\subsection{Related Work}

The cryptography community has extensively studied the problem of secure multi-party computation (MPC), also known as secure function evaluation, in which Alice and Bob, having inputs $x$ and $y$, respectively, want to compute a function $f(x,y)$ jointly, without any of them learning anything about the other's input either from the communication, or from the result of the computation \cite{5394944}. The SDMM problem is related to MPC yet different; the design has to ensure that no computing server learns anything about the original data, but we can decide which part of the data is revealed to each server and in what form. Fully homomorphic encryption (FHE) is a class of cryptographic schemes that allow computations on ciphertexts, generating an encrypted result which, when decrypted, matches the result of the operations as if they had been performed on plaintext. These techniques rely on working over polynomial rings, and their security is based on the assumed (or proven) hardness of problems in ideal lattices \cite{10.1007/978-3-319-19962-7_27,EfficientSecureMatrixMultiplicationOverLWEBasedHomomorphicEncryption}. However, existing FHE schemes are slow and impractical. ``Somewhat homomorphic encryption'' (SHE) has been proposed as an alternative, which allows a limited number of homomorphic operations on ciphertexts. SHE is relatively faster, and ciphertext packing methods have been proposed for operations like secure inner products \cite{10.1007/978-3-319-19962-7_27,10.1007/978-3-642-54568-9_3}, and secure matrix multiplications \cite{EfficientSecureMatrixMultiplicationOverLWEBasedHomomorphicEncryption}, which generalizes Yasuda et al.'s packing method for inner products in \cite{10.1007/978-3-642-54568-9_3}. Some works also propose methods for performing other matrix computations in an information theoretically secure manner, like Gaussian elimination, matrix inversion, comparison, equality test, or exponentiation \cite{Bouman2018NewPF,10.1007/3-540-44647-8_7}.

There is also a growing literature on distributed matrix multiplication, where a lot of effort has been put into speeding up computations, increasing reliability, and/or reducing communication overhead using coding and communication theoretic ideas \cite{8002642,pmlr-v70-tandon17a,NIPS2017_7027,8765375,Dutta2018AUC,8437563,DBLP:journals/corr/abs-1806-00939}. The initial papers considered a slightly different context of speeding up parallel computations by introducing ``computation redundancy'' to mitigate the problem of straggling servers \cite{8002642,pmlr-v70-tandon17a}. Straggling servers refer to slow/unresponsive servers due to which completion of the computation is delayed. A standard way of dealing with stragglers is to introduce ``computation redundancy'', that is, assigning extra computations to each server. In coded computation against stragglers, the performance metric is the \textit{recovery threshold}; that is, the minimum integer $r$ such that the computation result is recoverable from any $r$ successful (non-delayed, non-faulty) servers. The common theme in the `coded computation' literature is to treat stragglers as `erasures' in communications, and exploit ideas for coding against erasures, which allows reliable reconstruction of the desired result from an arbitrary set of successfully received symbols. Reference \cite{NIPS2017_7027} uses polynomial codes to construct a scheme in which the computation is completed as long as any $K$ out of $N$ evaluations of a polynomial are received from the servers. To multiply two matrices $\mathbf{A}$ and $\mathbf{B}$ with the help of $N$ servers, the polynomial code in \cite{NIPS2017_7027} partitions $\mathbf{A}$ row-wise and $\mathbf{B}$ column-wise (row-by-column partitioning), and generates encoded matrices as evaluations of a polynomial with the blocks as the coefficients, similarly to Reed Solomon codes \cite{doi:10.1137/0108018}. A follow-up work \cite{8765375} proposes a new polynomial coded computation scheme called MatDot, which achieves the optimal recovery threshold for column-wise partitioning of matrix $\mathbf{A}$ and row-wise partitioning of $\mathbf{B}$ (sum-of-outer-products method), which we shall refer to as the column-by-row partitioning henceforth in the paper, at the expense of an increase in the communication cost. In \cite{8765375}, the authors also propose PolyDot codes that interpolate between the polynomial codes of \cite{NIPS2017_7027} and MatDot codes. PolyDot codes are later improved as generalized PolyDot (GPD) codes in \cite{Dutta2018AUC}, which achieve the optimal recovery threshold for any arbitrary partitioning of the input matrices. Entangled polynomial codes, proposed in parallel in \cite{8437563}, also achieve the same performance as GPD codes. Bivariate polynomial codes are introduced in \cite{Hasircioglu2020BivariatePC} for straggler mitigation when servers can compute and transmit multiple partial computations. 

Subsequent papers, inspired by the works on straggler mitigation in distributed matrix multiplication, consider the SDMM problem from an information theoretic perspective. These papers aim for information theoretic security, independent of the computational capacities of the attackers, as opposed to cryptographic techniques. These works typically assume that the data to be used for computations belongs to the user; hence, they focus on privacy against the servers. The earlier papers on SDMM consider \textit{download rate} as the performance metric, which is defined as the ratio of the number of bits required to represent the computation result to the total number of bits that the servers must transmit to the user. Reference \cite{2018arXiv180600469C} uses the idea of polynomial codes from the literature on straggler mitigation to propose an SDMM scheme based on Shamir's secret sharing scheme \cite{Shamir:1979:SS:359168.359176}, which is shown to achieve the optimal download rate for \textit{one-sided} SDMM (where only one of the matrices is kept secure). An achievable scheme is also proposed for \textit{two-sided} SDMM (both matrices are kept secure). Reference \cite{8849446} introduces GASP codes, which improve the download rate for two-sided SDMM by aligning the degrees of the terms in the polynomial code so that the desired products appear as distinct terms.

In \cite{2019arXiv190806957J}, new converse bounds on the optimal download rate for SDMM are obtained by showing that the capacity of a multi-message X-secure T-private information retrieval (MM-XSTPIR) problem (\cite{8341744,8598994}) provides an upper bound on the download rate of the SDMM problem. The optimal download rate of the MM-XSTPIR problem is shown to depend on the dimensions of the matrices $\mathbf{A}$ and $\mathbf{B}$. The scheme in \cite{2019arXiv190806957J} allows the joint retrieval of a batch of matrix products (\textit{batch multiplication}), instead of multiplying two matrices using the \textit{matrix partitioning} approach, resulting in a coding gain. Other recent works on secure distributed batch matrix multiplication include \cite{DBLP:journals/corr/abs-2001-05101} and \cite{Chen_2020}.

Reference \cite{DBLP:journals/corr/abs-1806-00939} combines straggler mitigation and secure computation using the batch multiplication approach. Byzantine security is also considered, which refers to security against adversarial servers that may actively corrupt the results they send back to the user. Their scheme is based on Lagrange polynomials, and it achieves the optimal recovery threshold for any multi-linear function computation. Since the scheme in \cite{DBLP:journals/corr/abs-1806-00939} is designed for batch computation of any function, we can adapt it for matrix multiplication using a matrix-partitioning based approach, and compare its performance with that of other matrix partitioning based schemes. This can be done by treating the partitions of the matrices to be multiplied as batches of data. Therefore, if $\mathbf{A}$ is partitioned column-wise and $\mathbf{B}$ is partitioned row-wise into $K$ partitions each, the scheme in \cite{DBLP:journals/corr/abs-1806-00939} has the recovery threshold of $2(K+T-1)+S+2A+1$, where $S$ and $ A$ are the numbers of stragglers and Byzantine adversaries, respectively. If $\mathbf{A}$ is partitioned row-wise and $\mathbf{B}$ is partitioned column-wise, into $K$ and $L$ partitions respectively, the recovery threshold is $2(KL+T-1) + S+2A+1$, which is the same performance as that of the GASP codes in \cite{8849446} for big $T$, which can be seen by setting $S=A=0$. For arbitrary matrix partitions, the Lagrange coded scheme does not perform as well as the secure generalized PolyDot (SGPD) codes introduced later in \cite{Aliasgari:ISIT:19}. In \cite{Aliasgari:ISIT:19}, the trade-off between the download rate and the recovery threshold, first studied in \cite{2018arXiv180110292D} using the MatDot and PolyDot schemes for straggler mitigation, is extended to SDMM. The SGPD codes achieve the same recovery threshold as Lagrange codes for the sum-of-outer-products method.

In \cite{Kakar2019UplinkDownlinkTI}, the trade-off between the upload and download costs for SDMM is studied. While the \textit{download cost} is simply the reciprocal of the download rate, the \textit{upload cost} is defined as the ratio of the total number of bits that the user must send to the servers to the total size in bits of the data matrices. In \cite{Kakar2019UplinkDownlinkTI}, a secure cross subspace alignment (SCSA) scheme with adjustable upload cost (USCSA) is presented for SDMM. The tradeoff between the upload and download costs for SDMM schemes using only the row-by-column partitioning is studied in \cite{Kakar2019UplinkDownlinkTI}, but not the tradeoff for SDMM schemes using the column-by-row partitioning.

The work that is most related to ours is \cite{8437651}, which considers the setting in which the data is generated at distributed source nodes, and does not belong to the user requesting the computation. It extends the BGW (Ben-Or, Goldwasser and Widgerson) scheme from \cite{10.1145/3335741.3335756}, which was first proposed in the context of secure MPC, for multiplication of matrices using a connected network of computing servers. The sources are assumed not to be connected with each other, while the servers are. The latter assumption is exploited to reduce the communication to the user by allowing the servers to cooperate securely. Thus, the servers share their results from the first round of computation among each other using Shamir's secret sharing scheme, and compute a linear combination of the received shares of the results. This inter-server cooperation allows a smaller number of servers to send these linear combinations to the user. This particular model also imposes privacy against the user, that is, the user cannot learn anything about the data beyond what it learns from the computation result. This constraint is not imposed in other papers, where the user is the source of the input data; and neither in \cite{2019arXiv190806957J}, where distributed source nodes generate the data.

\subsection{Main contributions:}
With respect to the rich literature on the topic that we have summarized above, the main novel contributions of our work can be summarized as follows: 
\begin{itemize}
    \item We first introduce a novel polynomial coding scheme exploiting the properties of discrete Fourier transform, and show that it achieves a near optimal upload cost for SDMM of two matrices, while achieving the optimal upload cost for the special case of computation with own data, in which the user has access to the matrices used in the computations. The key difference between the existing polynomial coded SDMM schemes and the one proposed in this paper is that instead of evaluating the polynomial over some arbitrary distinct points, here we evaluate the polynomial at the $N$-th roots of unity, which yields a discrete Fourier transform over the finite field, also known as the number theoretic transform (NTT).
    \item The proposed scheme can be implemented in an efficient manner using the recently developed fast Fourier transform (FFT) algorithm on finite fields \cite{7565465}, and has negligible decoding complexity. 
    \item  We generalize the proposed scheme for SDMM of two matrices to introduce straggler mitigation.
    \item We extend the proposed scheme to securely multiply multiple matrices on distributed servers securely. Our scheme has a significantly lower upload cost than the existing alternatives, and is naturally scalable to the multiplication of an arbitrary number of matrices.
    \item  We also present some schemes for other matrix operations, such as addition, transpose, exponentiation, changing the parameters of the secret shares, and solving linear systems, which includes computing the matrix inverse. As a result, it is shown that arbitrary matrix polynomials can be computed securely on distributed servers.
\end{itemize}

\textbf{Notations:} The notation $[b]$ denotes the set of consecutive integers $\{1,\ldots,b \}$. The set of natural numbers is denoted by $\mathbb{N}$. Sets are denoted by calligraphic letters $\mathcal{L}$. Matrices and vectors are denoted by bold upper-case letters $\mathbf{A}$ and bold lower-case letters $\mathbf{a}$, respectively. The notation $[[ \mathbf{A} ]]_i $ denotes the secret share of matrix $\mathbf{A}$ delivered to server $i$, while the notation $[[ \mathbf{A} ]]_{\mathcal{L}}$ denotes the set of secret shares of matrix $\mathbf{A}$ delivered to servers belonging to the set $\mathcal{L}$.



\section{System Model}
We consider $\Gamma$ source nodes, $N\geq 2$ servers, and one user, for some $\Gamma, N \in \mathbb{N}$ (see Fig. \ref{fig:scheme}). Each source node is connected to each server through an orthogonal link. Each pair of servers is connected to each other, and each server is connected to the user through a private link. Each source node $\gamma \in [\Gamma] \triangleq \{1,\ldots, \Gamma \}$ has access to an input matrix $\mathbf{A}^{(\gamma)} \in \mathbb{F}_q^{m_{\gamma} \times m'_{\gamma}}$, for $m_{\gamma},m'_{\gamma} \in \mathbb{N}, \gamma \in [\Gamma]$, and a finite field $\mathbb{F}_q$ with $q$ elements. Using $N$ computing servers, the user wants to securely compute the result of a function $\mathbf{C}=G(\mathbf{A}^{(1)},\ldots, \mathbf{A}^{(\Gamma)})$, where $G$ is an arbitrary polynomial function, assuming appropriate matrix dimensions. We assume that the entries of $\mathbf{A}^{(\gamma)}, \gamma \in [\Gamma]$, are independent of each other and uniformly distributed over $\mathbb{F}_q$. We also assume that the servers are honest, but curious, which means that each server honestly follows the protocol without spurious insertions, yet may infer information about the inputs passively. 
Similarly to \cite{8437651}, the system operates in three phases: (1) Sharing, (2) Computation and communication, and (3) Reconstruction. A detailed description of these phases is as follows.
\begin{description}
\item[1)]\textbf{Sharing phase:} In this phase, the source $\gamma $ sends secret shares of matrix $\mathbf{A}^{(\gamma)}$, denoted by $[[\mathbf{A}^{(\gamma)}]]_i$, to server $i$. $[[\mathbf{A}^{(\gamma)}]]_i$ is a function of input matrix $\mathbf{A}^{(\gamma)}$ and a secret key $\mathbf{Sk}^{(\gamma)}$, and its dimensions will depend on the computing scheme.
\item[2)]\textbf{Computation and communication:} In this phase, the servers process the data they have received from the sources, and may also exchange messages with each other. We denote the set of all messages that server $i$ sends to server $i'$ in this phase by $\mathcal{M}_{i,i'}$.
\item[3)]\textbf{Reconstruction:} In this phase, every server $i\in [N]$ sends a message $[[\mathbf{C}]]_i$ to the user, who decodes the received messages to recover the desired computation.
\end{description}

The scheme must satisfy the following four constraints.

\paragraph{Correctness}
The user must be able to decode the final function $\mathbf{C}=G(\mathbf{A}^{(1)},\ldots, \mathbf{A}^{(\Gamma)})$ from the responses it receives from the servers, $[[\mathbf{C}]]_1, \ldots, [[\mathbf{C}]]_N$. The correctness constraint is imposed by:
\begin{align}
    H\left(\mathbf{C}\big| [[\mathbf{C}]]_1, \ldots, [[\mathbf{C}]]_N\right)=0.
\end{align}
\begin{remark}
We will also consider the special case of \textit{computation with own data}, where the data matrices used for computation belong to the user, as in the distributed computation scenario studied in \cite{2018arXiv180600469C,8849446,DBLP:journals/corr/abs-1806-00939,Aliasgari:ISIT:19}. In this scenario, the correctness constraint becomes:
\begin{align}
    H\left(\mathbf{C}\big| [[\mathbf{C}]]_1, \ldots, [[\mathbf{C}]]_N, \mathbf{Sk}^{(1)},\ldots, \mathbf{Sk}^{(\Gamma)}\right)=0.
\end{align}
\end{remark}

\paragraph{Security against colluding servers}
The goal is to recover $\mathbf{C}$ reliably and securely even if any $T<N$ servers collude to extract some information about the input matrices. Hence, for any $\mathcal{L} \subset[N]$ with $\vert \mathcal{L} \vert \leq T $, the encoded matrices $[[\mathbf{A}^{(\gamma)}]]_{\mathcal{L}}$, and the set of messages communicated by the servers in $\mathcal{L}^c$ to the servers in $\mathcal{L}$, denoted by $\mathcal{M}_{\mathcal{L}^c,\mathcal{L}}$, must not reveal any information about the source matrices, $\{ \mathbf{A}^{(\gamma)} \}_{\gamma=1}^{\Gamma}$. Accordingly, the security constraint is specified as,
\begin{align} 
   I\left(\left\{\mathbf{A}^{(\gamma)}\right\}_{\gamma=1}^{\Gamma}; \left\{[[\mathbf{A}^{(\gamma)}]]_{\mathcal{L}}\right\}_{\gamma=1}^{\Gamma},\mathcal{M}_{\mathcal{L}^c,\mathcal{L}}\right)=0,\hspace{0.6cm} \forall \mathcal{L} \subseteq [N], \vert \mathcal{L} \vert \leq T.\label{security_constraint}
\end{align}
 
\paragraph{Security against the user}
The user must not gain additional information about the input matrices beyond the result of the function $G$. This is defined as:
\begin{align}
   \nonumber  I\left(\mathbf{A}^{(1)},\ldots,\mathbf{A}^{(\Gamma)} ; [[\mathbf{C}]]_1,\ldots, [[\mathbf{C}]]_N \big| \mathbf{C}\right)=0.
\end{align}

For a given number of servers $N$ and security requirement $T$, the performance will be measured in terms of the \textit{upload cost} from the source nodes to the servers. The upload cost is defined as follows:
\begin{align}\label{storage_cost}
    \chi_{UL}\triangleq \frac{\sum_{\gamma=1}^{\Gamma}\sum_{i=1}^{N} H([[\mathbf{A}^{(\gamma)}]]_i)}{\sum_{\gamma=1}^{\Gamma} H(\mathbf{A}^{(\gamma)})},
\end{align}
and it quantifies the normalized amount of information that must be delivered to the servers. In most cases the amount of computation that must be carried out by each server depends on the amount of information delivered to it, and many cloud computing services charge users based on the amount of information delivered and stored at each server. Hence, minimizing the upload cost will reduce both the latency and the cost of computations. Accordingly, our objective is to securely compute $\mathbf{C}$ by incurring the minimum upload cost. Due to the symmetry across the servers, we will assume (unless stated otherwise) that the secret shares sent from each source to each server is of the same size, and is a $(\sfrac{1}{K})^{th}, K\in \mathbb{N}$, fraction of the size of the input matrices.

\begin{definition}
An $(N,K,T)$ SDMC scheme uses $N$ servers, sends $(\sfrac{1}{K})^{th}$ fraction of the input data matrices' size to each server, and is secure against any $T$ colluding servers. Hence, the upload cost of an $(N,K,T)$ SDMC scheme is given by $\chi_{UL} = \frac{N}{K}$.
\end{definition}

\section{SDMM for $\Gamma=2$}\label{polymat_eg}
In this section we focus exclusively on the secure distributed multiplication of two matrices, i.e., $\Gamma = 2$. This problem will allow us to present the main ideas behind our coded computation scheme. We first present our main result in the following theorem.
\begin{theorem}
We can securely multiply two matrices using $N$ servers, $T$ of which may collude, with $N> 2T$, with an upload cost of $\frac{N}{N-2T}$. In other words, an $(N,N-2T,T)$ SDMM scheme for two input matrices is achievable.
\end{theorem}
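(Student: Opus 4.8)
The plan is to construct an explicit polynomial coding scheme that evaluates at the $N$-th roots of unity in $\mathbb{F}_q$ (assuming $q$ is chosen so that these exist), turning the encoding into a finite-field DFT. First I would partition the two input matrices compatibly for the sum-of-outer-products (column-by-row) multiplication: write $\mathbf{A}^{(1)}$ column-wise and $\mathbf{A}^{(2)}$ row-wise into $K \triangleq N - 2T$ blocks each, so that $\mathbf{A}^{(1)}\mathbf{A}^{(2)} = \sum_{k=1}^{K} \mathbf{A}^{(1)}_k \mathbf{A}^{(2)}_k$. Each source then forms a polynomial whose first $K$ coefficients are its blocks and whose remaining $T$ coefficients are independent uniform random matrices (the secret key $\mathbf{Sk}^{(\gamma)}$ from the system model), giving degree-$(K+T-1)$ polynomials $f_1(x)$ and $f_2(x)$. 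The secret share sent to server $i$ is the evaluation $f_\gamma(\omega^{i})$ at the $i$-th root of unity $\omega$; since each evaluation is a single matrix block, the share to each server is a $1/K$ fraction of the input, giving $\chi_{UL} = N/K = N/(N-2T)$ as claimed.

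Next I would verify the three constraints. For \textbf{correctness}, server $i$ computes $f_1(\omega^i)f_2(\omega^i)$, which is an evaluation at $\omega^i$ of the product polynomial $h(x) = f_1(x)f_2(x)$ of degree $2(K+T-1) = 2K+2T-2 \le N-1$; hence the $N$ server responses are exactly the DFT of the coefficient vector of $h$, and the user recovers all coefficients by an inverse DFT (cheap via the finite-field FFT of \cite{7565465}). The desired product $\sum_k \mathbf{A}^{(1)}_k\mathbf{A}^{(2)}_k$ appears as a single known coefficient of $h$ (the coefficient of $x^{K-1}$ under the standard MatDot-style alignment), so the user extracts it; the remaining coefficients are functions of the random keys and the cross terms and are simply discarded. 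For \textbf{security against colluding servers}, I would show that for any $\mathcal{L}$ with $|\mathcal{L}| \le T$, the $T$ evaluations $\{f_\gamma(\omega^i)\}_{i\in\mathcal{L}}$ are statistically independent of $\mathbf{A}^{(\gamma)}$: the $T$ random coefficient-matrices act as a one-time pad because the relevant $T \times T$ submatrix of the Vandermonde (DFT) matrix formed by distinct roots of unity is invertible over $\mathbb{F}_q$, so the shares are uniformly distributed regardless of the data. Since in this $\Gamma=2$ scheme the servers need not communicate with each other, $\mathcal{M}_{\mathcal{L}^c,\mathcal{L}}$ is empty and \eqref{security_constraint} follows. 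For \textbf{security against the user}, I would argue that the non-target coefficients of $h(x)$ that the user learns are masked by products involving the independent random key matrices, so they leak nothing about $\mathbf{A}^{(1)},\mathbf{A}^{(2)}$ beyond $\mathbf{C}$; this needs a short entropy computation showing the conditional mutual information vanishes.

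The main obstacle I anticipate is the \emph{degree/alignment bookkeeping} together with the \emph{field-existence requirement}. Concretely: one must choose the two polynomials' coefficient layouts (which powers of $x$ carry data blocks, which carry random padding) so that (i) exactly $T$ degrees of freedom per polynomial are randomized, (ii) the product $h(x)$ has degree at most $N-1$ so the DFT is invertible, and (iii) the desired sum $\sum_k \mathbf{A}^{(1)}_k\mathbf{A}^{(2)}_k$ collects into one isolated coefficient rather than being entangled with key-dependent cross terms. Getting all three simultaneously with exactly $K = N-2T$ data blocks is the crux of the construction. Secondarily, the DFT approach requires $\mathbb{F}_q$ to contain a primitive $N$-th root of unity, i.e. $N \mid q-1$; I would note this is a mild assumption (choose $q$ appropriately, or embed in an extension field) and does not affect the upload-cost formula. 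The security arguments themselves reduce to standard Vandermonde-invertibility and one-time-pad facts once the layout is fixed, so they should be routine.
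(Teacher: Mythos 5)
Your overall architecture matches the paper's: column-by-row partitioning into $K=N-2T$ blocks, $T$ appended uniform key blocks per matrix, evaluation at the $N$-th roots of unity (requiring $N \mid q-1$), the Vandermonde one-time-pad argument for server security, and the observation that $\mathcal{M}_{\mathcal{L}^c,\mathcal{L}}=\phi$. The upload-cost accounting and the collusion-security argument are essentially the paper's.

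However, there is a genuine gap in your correctness argument, and it sits exactly at the point you yourself flag as ``the crux.'' You assert that $\deg h = 2(K+T-1)\le N-1$ so that the $N$ responses are an invertible DFT of the coefficient vector of $h$. With $K=N-2T$ this reads $2N-2T-2\le N-1$, i.e.\ $N\le 2T+1$, so the bound fails for every $K\ge 2$. Your stated design requirement (ii) --- that the product polynomial have degree at most $N-1$ so the full DFT is invertible --- is therefore incompatible with $K=N-2T$; enforcing it would cap $K$ at roughly $(N-2T+1)/2$, which is precisely the secure-MatDot rate $\chi_{UL}=\frac{2N}{N-2T+1}$ that the paper's scheme beats by a factor of two. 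The idea you are missing is the paper's central one: do \emph{not} ask for invertibility. Evaluating at roots of unity computes $h(x)\bmod (x^N-1)$, i.e.\ a circular convolution, and the paper deliberately embraces the resulting aliasing because only one coefficient is wanted. It places the desired sum $\sum_l\mathbf{A}_l\mathbf{B}_l$ at exponent $0$ (using negative powers of $x$ for $\mathbf{B}$, with $\mathbf{B}$'s keys offset to exponents $-(K+T),\dots,-(K+2T-1)$ so that no data--key or key--key cross term has exponent divisible by $N$), and then recovers it by simply averaging the $N$ responses via $\sum_{i=1}^{N}(\alpha_N^{i-1})^s=0$ for $N\nmid s$ --- no interpolation at all. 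As it happens, your all-nonnegative MatDot-style layout can be repaired: one can check that no exponent in the support of $h$ other than $K-1$ is congruent to $K-1$ modulo $N$, so the aliased coefficient at $x^{K-1}$ still equals the desired product. But that is not the argument you gave, and as written your proof neither establishes correctness nor resolves the contradiction between your requirement (ii) and the claimed parameters.

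A secondary point: your claim that security against the user follows from ``a short entropy computation showing the conditional mutual information vanishes'' is too optimistic. The paper shows (via Lemma \ref{entropy_product}) that the residual aliased coefficients are independent of the inputs only under dimension conditions such as $n\ge\max(m,p)$ with i.i.d.\ uniform inputs; otherwise the leakage is strictly positive and an additional round of secure inter-server share conversion is needed to discard the residual terms before delivery to the user.
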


We present the following example to illustrate the essential ingredients of the scheme. For ease of notation, we use the notation $\mathbf{A}^{(1)}=\mathbf{A}$ and $\mathbf{A}^{(2)}=\mathbf{B}$ in this section.

\textbf{Example:}
We consider distributed multiplication of matrices $\mathbf{A} \in \mathbb{F}_q^{m\times n}$ and $\mathbf{B} \in \mathbb{F}_q^{n\times p}$ over $N=7$ servers, any $T=2$ of which may collude (see Fig. \ref{fig:scheme}).

\subsubsection{Sharing phase}
The matrices $\mathbf{A}$ and $\mathbf{B}$ are partitioned into $K=N-2T=3$ blocks of dimensions $m\times \frac{n}{3}$ and $\frac{n}{3}\times p$, respectively, as follows:
\begin{align}\label{partition}
    \mathbf{A}=\left[ \begin{array}{ccc}
         \mathbf{A}_1  & \mathbf{A}_2 & \mathbf{A}_3
    \end{array} \right] , \mathbf{B}=\left[ \begin{array}{c}
         \mathbf{B}_1  \\
         \mathbf{B}_2 \\
         \mathbf{B}_3
    \end{array} \right] ,
\end{align}
and the product $\mathbf{C}=\mathbf{AB}$ is given by
\begin{align}
    \mathbf{C}=\mathbf{A}_1\mathbf{B}_1 + \mathbf{A}_2\mathbf{B}_2 + \mathbf{A}_3\mathbf{B}_3.
\end{align}

\begin{figure}
    \centering
    \includegraphics[scale=0.74]{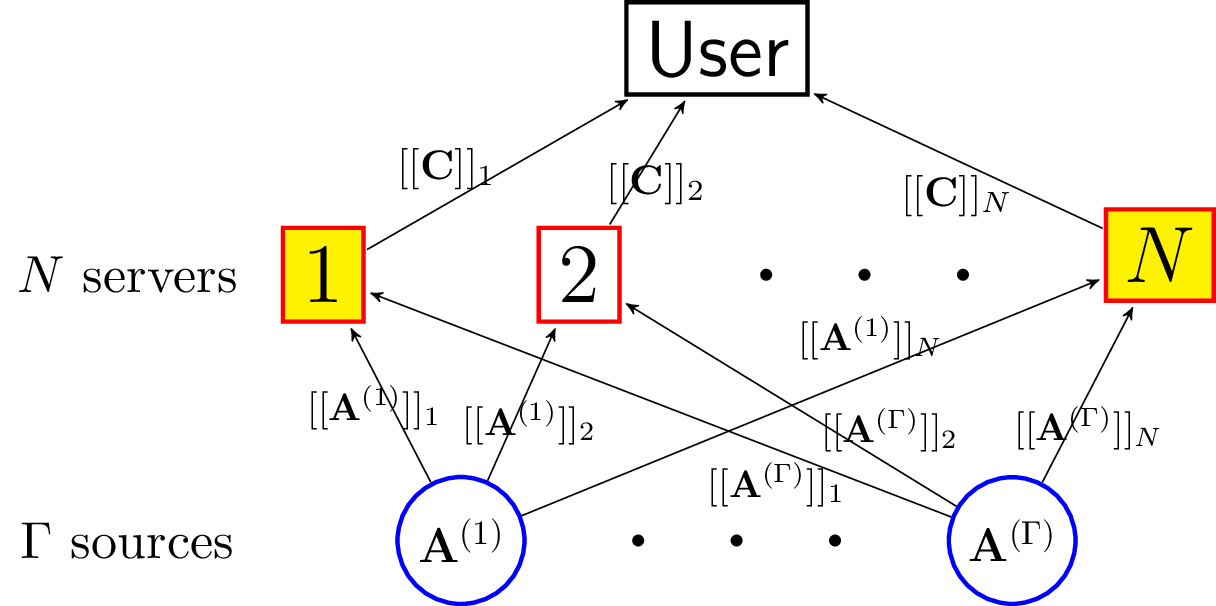}
    \caption{System model for $\Gamma$ matrices with $N$ servers, any $T$ of which may collude (shown as shaded servers).}
    \label{fig:scheme}
\end{figure}

The matrices $\mathbf{R}_i\in \mathbb{F}_q^{m \times \frac{n}{K}}$ and $\mathbf{S}_i\in \mathbb{F}_q^{\frac{n}{K} \times p}, i \in [T]$, are generated, whose entries are independent and identically distributed (i.i.d.) uniform random variables from the finite field $\mathbb{F}_q$. The matrices $\mathbf{R}=[\mathbf{R}_1,\ldots,\mathbf{R}_T]$ and $\mathbf{S}=[\mathbf{S}_1,\ldots,\mathbf{S}_T]$ are used as the secret keys to encode the input matrices $\mathbf{A}$ and $\mathbf{B}$, respectively. The following two polynomials are constructed to encode the matrices:
\begin{align}
    \mathbf{A}(x)&=\mathbf{A}_1 + \mathbf{A}_2 x + \mathbf{A}_3 x^2 + \mathbf{R}_1 x^3 + \mathbf{R}_2 x^4 \label{poly_A}\\
    \mathbf{B}(x)&= \mathbf{B}_1 + \mathbf{B}_2 x^{-1} + \mathbf{B}_3 x^{-2} + \mathbf{S}_1 x^{-5} + \mathbf{S}_2 x^{-6}. \label{poly_B}
\end{align}
Note that the encoding polynomials of $\mathbf{A}$ and $\mathbf{B}$ are different from each other, depending on the order of multiplication. We refer to the encoding of matrix $\mathbf{A}$ as ``left-encoding'', and that of matrix $\mathbf{B}$ as ``right-encoding''. We refer to each of the $N$ secret shares obtained from left-encoding of matrix $\mathbf{A}$ as an ``$(N,K,T)$ left-share'', where the $i^{th}, \forall i\in [N]$ secret share is denoted by $[[\mathbf{A}]]_i^{L}$, and each of those from right-encoding of matrix $\mathbf{B}$ as an ``$(N,K,T)$ right-share'', where the $i^{th}, \forall i \in [N]$ secret share is denoted by $[[\mathbf{B}]]_i^{R}$.

Let $\alpha_7$ be a primitive $7^{th}$ root of unity in $\mathbb{F}_q$, and $1,\alpha_7,\alpha_7^2,\ldots, \alpha_7^6$ be the $7^{th}$ roots of unity in $\mathbb{F}_q$. Then the polynomials $\mathbf{A}(x)$ and $\mathbf{B}(x)$ are evaluated at $\alpha_7^{i-1}$ to obtain $[[\mathbf{A}]]^L_i,[[\mathbf{B}]]^R_i$; that is, $[[\mathbf{A}]]^L_i=\mathbf{A}(\alpha_7^{i-1})$ and $[[\mathbf{B}]]^R_i=\mathbf{B}(\alpha_7^{i-1}), i=1,\ldots,7$, and the secret shares $[[\mathbf{A}]]^L_i,[[\mathbf{B}]]^R_i$ are sent to server $i$, $i=1,\ldots, 7$. The number of symbols sent by the source nodes to the servers is given by $H([[\mathbf{A}]]_i)+H([[\mathbf{B}]]_i)=\frac{mn}{K}+\frac{np}{K}$. \begin{remark}
We note here that evaluating the polynomials $\mathbf{A}(x)$ and $\mathbf{B}(x)$ at the roots of unity is equivalent to computing the discrete Fourier transform of the sequences $\left\{\mathbf{A}_1,\ldots,\mathbf{A}_K,\mathbf{R}_1,\ldots, \mathbf{R}_T\right\}$ and $\left\{\mathbf{B}_1,\ldots,\mathbf{B}_K,\mathbf{0}_1,\ldots,\mathbf{0}_T, \mathbf{S}_1,\ldots, \mathbf{S}_T\right\}$ in a finite field, where $\mathbf{0}_k, k=1,\ldots, T,$ are zero matrices. This can be carried out efficiently using the FFT algorithm for finite fields introduced in \cite{7565465}.
\end{remark}

\subsubsection{Computation phase}
Server $i$ computes the product
\begin{align}
    [[\mathbf{C}]]_i= [[\mathbf{AB}]]_i = [[\mathbf{A}]]^L_i [[\mathbf{B}]]^R_i,
\end{align}
which is equivalent to the evaluation of the polynomial 
\begin{align}
   \mathbf{C}(x)= & \mathbf{A}(x) \mathbf{B}(x)\\
    =& \mathbf{A}_1\mathbf{S}_2 x^{-6} + \left(\mathbf{A}_1\mathbf{S}_1 + \mathbf{A}_2\mathbf{S}_2 \right)x^{-5} + \cdots 
   + \left( \mathbf{A}_1\mathbf{B}_2 + \mathbf{A}_2\mathbf{B}_3 + \mathbf{R}_2\mathbf{S}_1 \right) x^{-1}  + \left(\sum_{l=1}^{3} \mathbf{A}_l\mathbf{B}_l \right) + \cdots + \mathbf{R}_2\mathbf{B}_1 x^{4} \label{poly_C} \\
&= \sum_{l=1}^3  \mathbf{A}_l\mathbf{B}_l + \left( \mathbf{A}_2\mathbf{B}_1 + \mathbf{A}_1\mathbf{S}_2 + \mathbf{A}_3\mathbf{B}_2 + \mathbf{R}_1\mathbf{B}_3 \right) x + \cdots + \left( \mathbf{A}_3\mathbf{S}_1 + \mathbf{R}_1\mathbf{S}_2 + \mathbf{R}_2\mathbf{B}_1 \right)x^4 + \nonumber\\
& \hspace{5cm} \cdots + \left( \mathbf{A}_1\mathbf{B}_2 + \mathbf{A}_2\mathbf{B}_3 + \mathbf{R}_2\mathbf{S}_1 \right)x^6  \label{aliased_polyC} 
\end{align}
at $x=\alpha_7^{i-1}$.

\begin{remark}
If we evaluate the polynomial over some arbitrary distinct points, then the product of two secret shares corresponds to the product of two polynomials, which is equivalent to the so-called \textit{linear convolution}. On the other hand, if we evaluate the polynomial over the $N$-th roots of unity, as is done in our paper, then the multiplication of two secret shares corresponds to the product of two polynomials modulo $x^N-1$, which is equivalent to the so-called \textit{circular convolution}. While the circular convolution introduces aliasing, since the only desired term for inner-product type of matrix multiplication is the constant term (i.e., the DC term), the correctness of the result is guaranteed despite aliasing. Such aliasing allows further interference alignment, which improves the efficiency compared to linear convolution. 
\end{remark}

\subsubsection{Reconstruction phase}
The servers send $[[\mathbf{C}]]_i$'s to the user. We know that 
\begin{align}\label{zero_sum}
    \sum_{i=1}^{N} (\alpha_N^{i-1})^s=0, \hspace{0.5cm}  \forall s: N \nmid s.
\end{align}

Therefore, the user computes the average of the received responses to obtain the final result:
\begin{align}
  \frac{1}{7}  \sum_{i=1}^{7} [[\mathbf{C}]]_i =\sum_{l=1}^{3} \mathbf{A}_l\mathbf{B}_l  , \label{final_sum}
\end{align}
because the non-constant terms from Eq. (\ref{poly_C}) sum to $0$ thanks to Eq. \eqref{zero_sum}.

For the general case with $N$ servers, $T$ of which can collude, the input matrices are partitioned into $K=N-2T$ submatrices similarly to Eq. \eqref{partition}. Then the product can be written as $\mathbf{C}=\mathbf{AB}=\sum_{l=1}^{K} \mathbf{A}_l\mathbf{B}_l$.

The matrices $\mathbf{A}$ and $\mathbf{B}$ are encoded with the following two polynomials:
\begin{align}
    \mathbf{A}(x)&= \sum_{l=1}^{K} \mathbf{A}_{l}x^{l-1} + \sum_{l=1}^{T}\mathbf{R}_{l} x^{K+l-1}, \label{polyA}
\end{align}
and
\begin{align}
    \mathbf{B}(x)&= \sum_{l=1}^{K} \mathbf{B}_{l}x^{-l+1} + \sum_{l=1}^{T}\mathbf{S}_{l} x^{-K-T-l+1}. \label{polyB}
\end{align}

We define the product polynomial as follows.
\begin{align}
    \mathbf{C}(x)=\sum_{l=1}^{K} \mathbf{A}_l \mathbf{B}_l  + (\text{non-constant terms}). \label{polyC}
\end{align}

The goal is to recover the constant term in $\mathbf{C}(x)$ from the computations of $N$ servers. The polynomials $\mathbf{A}(x)$ and $\mathbf{B}(x)$ are evaluated at the $N^{th}$ roots of unity, denoted by $1,\alpha_N, \alpha_N^2, \ldots, \alpha_N^{N-1} \in \mathbb{F}_q$, where $\alpha_N$ is a primitive $N^{th}$ root of unity in $\mathbb{F}_q$. Thus, the values sent to server $i$ are $[[\mathbf{A}]]^L_i=\mathbf{A}(\alpha_N^{i-1})$ and $[[\mathbf{B}]]^R_i=\mathbf{B}(\alpha_N^{i-1})$.  Server $i\in [N]$ computes the share $[[\mathbf{C}]]_i=[[\mathbf{A}]]^L_i[[\mathbf{B}]]^R_i=\mathbf{C}(\alpha_N^{i-1})$, and sends it to the user. 

Thanks to Eq. \eqref{zero_sum}, the user obtains the desired result by averaging the received $[[\mathbf{C}]]_i$'s.
\begin{align}\label{constant_term}
  \frac{1}{N}  \sum_{i=1}^{N} [[\mathbf{C}]]_i = \sum_{l=1}^{K} \mathbf{A}_l \mathbf{B}_l.
\end{align}

\begin{remark}
Since the polynomials must be evaluated at the $N^{th}$ roots of unity in the proposed scheme, the finite field must be chosen to guarantee the presence of all the $N^{th}$ order roots of unity. Therefore, we must have $N \vert (q-1)$. This can be satisfied by appropriately choosing the field size $q$. This also guarantees that the multiplicative inverse of $N$ exists in $\mathbb{F}_q$, so that the $\frac{1}{N}$ in Eq. \eqref{constant_term} exists.
\end{remark}

\subsection{Special case: Computation with own data}\label{special_case}
For the special case in which the data matrices belong to the user, the upload cost can be further reduced since the user also has access to the random secret keys used for generating the secret shares. Most previous literature on SDMM considers this special case. The user partitions the input matrices into $K$ blocks as in Eq. \eqref{partition}, where $K=N-T$. The user then encodes the matrices using the following polynomials.
\begin{align}
    \mathbf{A}(x)&= \sum_{l=1}^{K} \mathbf{A}_{l}x^{l-1} + \sum_{l=1}^{T}\mathbf{R}_{l} x^{K+l-1}, \label{polyA_same}
\end{align}
and
\begin{align}
    \mathbf{B}(x)&= \sum_{l=1}^{K} \mathbf{B}_{l}x^{-l+1} + \sum_{l=1}^{T}\mathbf{S}_{l} x^{-K-l+1}, \label{polyB_same}
\end{align}
where $\mathbf{A}(x)$ is the same as in Eq. \eqref{polyA} while we have a slight change from Eq. \eqref{polyB} in the way the secret key is embedded into the $\mathbf{B}(x)$ polynomial. The user evaluates the polynomials on the $N^{th}$ roots of unity to generate the secret shares, and sends these shares to the servers. The servers return their computed results back to the user, where the constant term in the product polynomial $\mathbf{C}(x)$ is now given by $\sum_{i=1}^{K}\mathbf{A}_i\mathbf{B}_i + \sum_{i=1}^{T}\mathbf{R}_i\mathbf{S}_i$. The user then averages the received results to obtain
\begin{align}
    \frac{1}{N} \sum_{i=1}^{N} [[\mathbf{C}]]_i = \sum_{l=1}^{K} \mathbf{A}_l \mathbf{B}_l + \sum_{l=1}^{T}\mathbf{R}_l\mathbf{S}_l. \label{user_source_same}
\end{align}
Since the user has access to the secret keys it has used to encrypt the partitions, it can subtract their product, $\sum_{i=1}^{T}\mathbf{R}_i\mathbf{S}_i$, to obtain the desired result. We note here that the product of the random matrices $\mathbf{R}$ and $\mathbf{S}$ needs to be pre-computed by the user in order to obtain the desired matrix product from Eq. \eqref{user_source_same}. If $T<< N-T$, this would require much less computational resources compared to multiplying the matrices $\mathbf{A}$ and $\mathbf{B}$, or alternatively these multiplications can be done in advance in an offline manner, and stored at the user, and hence, this computation does not affect the computation latency.

\begin{theorem}
A $(N,N-T,T)$ SDMM scheme for two input matrices is achievable for the special case, or in other words, we can securely multiply two matrices using $N$ servers, $T$ of which may collude, with $N > T$, with an optimal upload cost of $\frac{N}{N-T}$.
\end{theorem}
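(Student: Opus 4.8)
The scheme is already fully specified in Section~\ref{special_case} (encode with \eqref{polyA_same}--\eqref{polyB_same} using $K=N-T$, evaluate at the $N$-th roots of unity, average the returned products and subtract the pre-computed $\sum_{l=1}^{T}\mathbf{R}_l\mathbf{S}_l$), so the plan is to verify its four requirements and then prove a matching converse. The achievability part (correctness, $T$-security, upload cost) reuses the same machinery as Theorem~1 essentially verbatim; only the converse needs a genuinely new argument, and that is where I expect the real work to be.

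\emph{Correctness.} Evaluating at the $N$-th roots of unity means server $i$ returns $\mathbf{C}(\alpha_N^{i-1})$ where $\mathbf{C}(x)=\mathbf{A}(x)\mathbf{B}(x)\bmod(x^{N}-1)$, i.e.\ a circular convolution of the coefficient sequences. I would identify the DC coefficient of this circular convolution: it is the sum of all products $\mathbf{A}_l\mathbf{B}_{l'}$, $\mathbf{A}_l\mathbf{S}_{l'}$, $\mathbf{R}_l\mathbf{B}_{l'}$, $\mathbf{R}_l\mathbf{S}_{l'}$ whose exponents in \eqref{polyA_same}--\eqref{polyB_same} add to $0\bmod N$. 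Since $K=N-T$, the data and key exponents of $\mathbf{A}(x)$ occupy $\{0,\dots,K-1\}$ and $\{K,\dots,N-1\}$, and those of $\mathbf{B}(x)$ are their negatives; a short check of these four ranges modulo $N$ shows the only colliding pairs are $(\mathbf{A}_l,\mathbf{B}_l)$ and $(\mathbf{R}_l,\mathbf{S}_l)$, giving DC coefficient $\sum_{l=1}^{K}\mathbf{A}_l\mathbf{B}_l+\sum_{l=1}^{T}\mathbf{R}_l\mathbf{S}_l$ as in \eqref{user_source_same}. Averaging the $N$ returned values isolates this coefficient via \eqref{zero_sum} (using $N\mid(q-1)$ so that $N^{-1}\in\mathbb{F}_q$), and subtracting $\sum_{l=1}^{T}\mathbf{R}_l\mathbf{S}_l$, which the user forms from its own keys, yields $\mathbf{A}\mathbf{B}$, establishing $H(\mathbf{C}\mid[[\mathbf{C}]]_1,\dots,[[\mathbf{C}]]_N,\mathbf{Sk}^{(1)},\mathbf{Sk}^{(2)})=0$.

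\emph{Security and upload cost.} Fix $\mathcal{L}\subseteq[N]$ with $|\mathcal{L}|=T$. Stacking the $T$ left-shares $[[\mathbf{A}]]_i=\mathbf{A}(\alpha_N^{i-1})$, $i\in\mathcal{L}$, the coefficient of $(\mathbf{R}_1,\dots,\mathbf{R}_T)$ is the $T\times T$ matrix with $(i,l)$ entry $\alpha_N^{(i-1)(K+l-1)}$, which factors as $\mathrm{diag}(\alpha_N^{(i-1)K})_{i\in\mathcal{L}}$ times the Vandermonde matrix in the nodes $\{\alpha_N^{i-1}\}_{i\in\mathcal{L}}$; the diagonal entries are units and the exponents $i-1$ are distinct in $\{0,\dots,N-1\}$, so this matrix is invertible. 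Hence for every fixed $\mathbf{A}$ the map $(\mathbf{R}_1,\dots,\mathbf{R}_T)\mapsto([[\mathbf{A}]]_i)_{i\in\mathcal{L}}$ is a bijection of $\big(\mathbb{F}_q^{m\times n/K}\big)^{T}$, so $([[\mathbf{A}]]_i)_{i\in\mathcal{L}}$ is uniform and independent of $\mathbf{A}$; the same argument with the nodes $\{\alpha_N^{-(i-1)}\}_{i\in\mathcal{L}}$ handles $([[\mathbf{B}]]_i)_{i\in\mathcal{L}}$ and $\mathbf{S}$. Since $\mathbf{R}$, $\mathbf{S}$, and $(\mathbf{A},\mathbf{B})$ are mutually independent and there is no inter-server communication for $\Gamma=2$, $\big([[\mathbf{A}]]_{\mathcal{L}},[[\mathbf{B}]]_{\mathcal{L}}\big)$ is uniform and independent of $(\mathbf{A},\mathbf{B})$, which is \eqref{security_constraint}. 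The upload cost is then immediate: server $i$ receives $[[\mathbf{A}]]_i$ and $[[\mathbf{B}]]_i$ of sizes $\tfrac{mn}{K}$ and $\tfrac{np}{K}$, so $\chi_{UL}=N/K=N/(N-T)$.

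\emph{Converse (the main obstacle).} It remains to show no own-data SDMM scheme can beat $N/(N-T)$, and this is the one part not delivered by the toolkit of Theorem~1. The plan is to reduce each matrix separately to a ramp secret-sharing lower bound. Specializing to $m=n=p$, fix an invertible value $b$ of $\mathbf{B}$ in the support of its uniform distribution: then $\mathbf{C}=\mathbf{A}b$ determines $\mathbf{A}$, and since the user reconstructs $\mathbf{C}$ from $[[\mathbf{C}]]_{[N]}$ and the keys while each $[[\mathbf{C}]]_i$ is (with $\mathbf{B}$ and its key frozen) a function of $[[\mathbf{A}]]_i$, the matrix $\mathbf{A}$ becomes a function of $\big([[\mathbf{A}]]_1,\dots,[[\mathbf{A}]]_N,\mathbf{Sk}^{(1)}\big)$, whereas security still gives $I\big(\mathbf{A};[[\mathbf{A}]]_{\mathcal{A}}\big)=0$ for $|\mathcal{A}|\le T$. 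Expanding $H(\mathbf{A})=H(\mathbf{A}\mid[[\mathbf{A}]]_{\mathcal{A}})=I\big(\mathbf{A};[[\mathbf{A}]]_{[N]\setminus\mathcal{A}},\mathbf{Sk}^{(1)}\bigm|[[\mathbf{A}]]_{\mathcal{A}}\big)$ should yield $\sum_{i\notin\mathcal{A}}H([[\mathbf{A}]]_i)\ge H(\mathbf{A})$ for every $T$-subset $\mathcal{A}$, and averaging over all $\mathcal{A}$ gives $\sum_{i}H([[\mathbf{A}]]_i)\ge\tfrac{N}{N-T}H(\mathbf{A})$, with the symmetric bound for $\mathbf{B}$, hence $\chi_{UL}\ge N/(N-T)$. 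The delicate point — and what I expect to be the hard part — is the decoding key $\mathbf{Sk}^{(1)}$ appearing inside the reconstruction, since the classical ramp bound presumes reconstruction from shares alone: one must argue either that the full share set already determines $\mathbf{Sk}^{(1)}$ (as it does here, the $N$ shares being an invertible DFT of the $K+T=N$ coefficient blocks) or, more robustly, fold $\mathbf{Sk}^{(1)}$ into the ramp scheme's internal randomness and re-derive the cut-set inequality in that generality, and this information-theoretic accounting is where care is needed.
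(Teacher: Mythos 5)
Your achievability argument is essentially the paper's: the scheme, the identification of the DC coefficient of the circular convolution as $\sum_l\mathbf{A}_l\mathbf{B}_l+\sum_l\mathbf{R}_l\mathbf{S}_l$, the averaging via \eqref{zero_sum}, and the upload-cost count all match Section \ref{special_case}. Your security step is in fact tighter than the paper's: where the paper simply asserts in \eqref{proof_final_step} that the $T$ shares are uniform and independent, you justify it by factoring the key-encoding matrix as a diagonal unit matrix times an invertible Vandermonde matrix, which is the argument that assertion actually rests on. Up to this point the proposal is correct and takes the same route.

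The divergence is the converse, and there the proposal has a genuine unresolved gap. The paper does not prove the lower bound at all: it cites the bound $\chi_{UL}\geq\frac{N}{N-T}$ from \cite{Kakar2019UplinkDownlinkTI} and stops. You instead attempt a self-contained ramp-secret-sharing cut-set argument, but as you yourself flag, the decoder's side information $\mathbf{Sk}^{(1)}$ breaks it. Concretely, expanding
$H(\mathbf{A})=I\bigl(\mathbf{A};[[\mathbf{A}]]_{[N]\setminus\mathcal{A}},\mathbf{Sk}^{(1)}\bigm|[[\mathbf{A}]]_{\mathcal{A}}\bigr)=I\bigl(\mathbf{A};[[\mathbf{A}]]_{[N]\setminus\mathcal{A}}\bigm|[[\mathbf{A}]]_{\mathcal{A}}\bigr)+I\bigl(\mathbf{A};\mathbf{Sk}^{(1)}\bigm|[[\mathbf{A}]]_{[N]}\bigr)$,
only the first term is bounded by $\sum_{i\notin\mathcal{A}}H([[\mathbf{A}]]_i)$; the second term is nonnegative and need not vanish for an arbitrary scheme, so the inequality $H(\mathbf{A})\leq\sum_{i\notin\mathcal{A}}H([[\mathbf{A}]]_i)$ does not follow. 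Your first proposed repair (that the full share set determines $\mathbf{Sk}^{(1)}$) is a property of the particular DFT construction, and a converse cannot assume it of a general scheme; your second repair (folding the key into the scheme's randomness and re-deriving the cut-set bound) is exactly the missing proof and is not carried out. So either supply that re-derivation in full, or do what the paper does and invoke the converse of \cite{Kakar2019UplinkDownlinkTI} directly; as written, the optimality half of the theorem is not established.
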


\begin{proof}
It follows from the definition of an SDMM scheme that the upload cost of the $(N,N-T,T)$ SDMM scheme proposed above is $\chi_{UL}=\frac{N}{N-T}$. Moreover, it is proved in \cite{Kakar2019UplinkDownlinkTI} that the optimal upload cost of an SDMM scheme is lower bounded by $\frac{N}{N-T}$. This proves the optimality of the proposed scheme in terms of the upload cost.
\end{proof}

\subsection{Proof of security against colluding servers}
We next prove that the proposed scheme is secure, i.e., the security constraint \eqref{security_constraint} is satisfied. We point out that there is no exchange of messages between the servers, that is, $\mathcal{M}_{i,i'}=\phi$ for all $i,i' \in [N], i\neq i'$. For any $\mathcal{L}$ with $\vert \mathcal{L} \vert = T$, we have
\begin{align}
   \hspace{-0.4cm} I(\mathbf{A},\mathbf{B}; [[\mathbf{A}]]_{\mathcal{L}}, [[\mathbf{B}]]_{\mathcal{L}})
    &= H([[\mathbf{A}]]_{\mathcal{L}}, [[\mathbf{B}]]_{\mathcal{L}}) - H([[\mathbf{A}]]_{\mathcal{L}},[[\mathbf{B}]]_{\mathcal{L}} \vert \mathbf{A}, \mathbf{B}) \\
    &  \overset{a}{=} H([[\mathbf{A}]]_{\mathcal{L}},[[\mathbf{B}]]_{\mathcal{L}}) - H(\mathbf{R}, \mathbf{S})\\
    & \overset{b}{=} H([[\mathbf{A}]]_{\mathcal{L}}) + H([[\mathbf{B}]]_{\mathcal{L}}) - H(\mathbf{R}) - H(\mathbf{S})\\
    & \leq \sum_{i \in \mathcal{L}} H([[\mathbf{A}]]_{i}) + \sum_{i \in \mathcal{L}} H([[\mathbf{B}]]_{i}) - \frac{mnT}{K} \log \vert \mathbb{F}_q \vert - \frac{pnT}{K} \log \vert \mathbb{F}_q \vert\\
    & =  0, \label{proof_final_step}
\end{align}
where $(a)$ follows from \eqref{polyA} and \eqref{polyB}; $(b)$ follows from the fact that $[[\mathbf{A}]]_{\mathcal{L}}$, $[[\mathbf{B}]]_{\mathcal{L}}$, $\mathbf{R}$ and  $\mathbf{S}$ are independent of each other; \eqref{proof_final_step} follows because the elements of the secret shares $[[\mathbf{A}]]_i \in \mathbb{F}_q^{m \times \frac{n}{K}}$ and $[[\mathbf{B}]]_i \in \mathbb{F}_q^{\frac{n}{K}\times p}, \forall i \in \mathcal{L}$ are independent and uniformly distributed in $\mathbb{F}_q$. Hence, $\sum_{i \in \mathcal{L}} H([[\mathbf{A}]]_{i}) = \frac{mnT}{K}\log \vert \mathbb{F}_q \vert$ and $\sum_{i \in \mathcal{L}} H([[\mathbf{B}]]_{i}) = \frac{pnT}{K}\log \vert \mathbb{F}_q \vert$, where $\vert \mathbb{F}_q \vert$ denotes the cardinality of the field $\mathbb{F}_q$.

\subsection{Security against the user}\label{security_user}
When the input matrices are generated by distributed source nodes, the user must not gain additional information beyond the result of the computation. Most existing schemes \cite{2018arXiv180600469C, 8849446, DBLP:journals/corr/abs-1806-00939, Aliasgari:ISIT:19} which rely on polynomial interpolation, do not satisfy this condition, since they are designed for the case when the data is generated by the user. If those schemes are employed, once $\mathbf{C}(x)=\mathbf{A}(x)\mathbf{B}(x)$ is interpolated by the user after receiving the evaluations of $\mathbf{C}(x)$ at a number of points equal to the number of terms in $\mathbf{C}(x)$, it can be factorized to obtain information about $\mathbf{A}(x)$ and $\mathbf{B}(x)$, thus leaking additional information to the user. Generally, factorizations of matrices are not unique in any field, but even then they can narrow the search space significantly in certain cases, thus leaking partial information, and in some cases, factorization leaks complete information of $\mathbf{A}$ and $\mathbf{B}$ by providing unique factors. For example, consider that $m=p=1$ and $K=n$, and the polynomials $\mathbf{A}(x)$ and $\mathbf{B}(x)$ are irreducible polynomials over $\mathbb{F}_q$. Then, on recovering the polynomial $\mathbf{C}(x) = \mathbf{A}(x)\mathbf{B}(x)$, there is a unique factorization of $\mathbf{C}(x)$ in $\mathbb{F}_q$ which provides the factors as the two constituent irreducible polynomials. Our scheme is robust to such information leakages.

We point out that besides the constant term, the user can recover the sum of the coefficients of $x^{N-i}$ and $x^{-i}, i \in [N-1]$, due to aliasing in Eq. \eqref{poly_C}, because $\alpha_N^{-i} = \alpha_N^{N-i}$. The polynomial after considering the aliasing effect is shown in Eq. \eqref{aliased_polyC}. These residual terms may or may not leak partial information about the input matrices to the user. To show this, first we mention a result from \cite{2019arXiv190806957J} that determines the entropy of the product of two matrices depending on their dimensions:

\begin{lemma}\label{entropy_product}
Let $\mathbf{P}, \mathbf{Q}$ be random matrices independently and uniformly distributed over $\mathbb{F}^{m\times n}_q$ and $\mathbb{F}_q^{n\times p}$, respectively. As $q \rightarrow \infty$, we have
\begin{align}
    H(\mathbf{PQ})&= \left\{ \begin{array}{cc}
        mp & n \geq \min(m,p) \\
        mn + np - n^2 & n < \min(m,p)
    \end{array} \right. ,\\
    H(\mathbf{PQ} \mid \mathbf{P}) &= \min\{ mp, np \}\\
     H(\mathbf{PQ} \mid \mathbf{Q}) &= \min\{ mp, mn \}
\end{align}
in $q-$ary units.
\end{lemma}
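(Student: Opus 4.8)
The plan is to reduce all three formulas to rank counting for the $\mathbb{F}_q$-linear multiplication maps $\mathbf{Q}\mapsto\mathbf{P}\mathbf{Q}$ and $\mathbf{P}\mapsto\mathbf{P}\mathbf{Q}$, together with the fact that a uniformly random matrix over $\mathbb{F}_q$ has full rank with probability $1-o(1)$ as $q\to\infty$; this last fact is precisely why the hypothesis $q\to\infty$ appears, since it makes all non-full-rank contributions negligible in $q$-ary units. (This is the statement cited from \cite{2019arXiv190806957J}; I sketch a self-contained argument.)

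First I would settle the conditional entropies. Fix $P$ with $\mathrm{rank}(P)=r$. The map $\mathbf{Q}\mapsto P\mathbf{Q}$ is linear, its image is exactly the set of $m\times p$ matrices whose columns lie in $\mathrm{col}(P)$, a set of size $q^{rp}$, and every fibre has size $q^{(n-r)p}$; hence $P\mathbf{Q}$ is uniform on that image and $H(\mathbf{P}\mathbf{Q}\mid\mathbf{P}=P)=rp$ in $q$-ary units. Averaging over $\mathbf{P}$ gives $H(\mathbf{P}\mathbf{Q}\mid\mathbf{P})=p\,\mathbb{E}[\mathrm{rank}(\mathbf{P})]$, and since $\mathrm{rank}(\mathbf{P})\le\min(m,n)$ always, with equality with probability $1-o(1)$, this tends to $p\min(m,n)=\min\{mp,np\}$. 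The formula for $H(\mathbf{P}\mathbf{Q}\mid\mathbf{Q})$ is the transpose of this argument.

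Next, for the unconditional entropy I would use the chain rule $H(\mathbf{P}\mathbf{Q})=H(\mathbf{P})+H(\mathbf{P}\mathbf{Q}\mid\mathbf{P})-H(\mathbf{P}\mid\mathbf{P}\mathbf{Q})$ with $H(\mathbf{P})=mn$. If $n\ge\min(m,p)$, assume $m\le p$ (otherwise work through $\mathbf{Q}$): a full-row-rank $\mathbf{P}$ makes $\mathbf{Q}\mapsto\mathbf{P}\mathbf{Q}$ surjective onto $\mathbb{F}_q^{m\times p}$ with uniform output, so $\mathbf{P}\mathbf{Q}$ is $o(1)$-close to uniform on $\mathbb{F}_q^{m\times p}$, and with the trivial bound $H(\mathbf{P}\mathbf{Q})\le mp$ this forces $H(\mathbf{P}\mathbf{Q})\to mp$. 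If $n<\min(m,p)$, then with probability $1-o(1)$ each of $\mathbf{P},\mathbf{Q}$ and $\mathbf{C}:=\mathbf{P}\mathbf{Q}$ has rank $n$, whence $\mathrm{col}(\mathbf{C})=\mathrm{col}(\mathbf{P})$; the matrices $\mathbf{P}$ consistent with a fixed rank-$n$ value $C$ are exactly $\mathbf{C}_0\mathbf{U}$ with $\mathbf{C}_0$ a fixed basis matrix of $\mathrm{col}(C)$ and $\mathbf{U}\in GL_n(\mathbb{F}_q)$, and $\mathbf{Q}$ is then forced because $\mathbf{P}$ is left-invertible, so the fibre of $(\mathbf{P},\mathbf{Q})$ over $C$ has size $|GL_n(\mathbb{F}_q)|=q^{n^2}(1-o(1))$; this gives $H(\mathbf{P}\mid\mathbf{P}\mathbf{Q})\to n^2$ and hence $H(\mathbf{P}\mathbf{Q})\to mn+np-n^2$. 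As an alternative route one can observe that $\mathbf{G}\mathbf{P}\mathbf{Q}\mathbf{H}$ has the same law as $\mathbf{P}\mathbf{Q}$ for $\mathbf{G}\in GL_m(\mathbb{F}_q)$, $\mathbf{H}\in GL_p(\mathbb{F}_q)$, so $\mathbf{C}$ is uniform over matrices of its own rank, and then one simply counts rank-$n$ matrices.

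The main obstacle is the careful bookkeeping of the $q\to\infty$ limit: one must verify that the full-rank events for $\mathbf{P}$, $\mathbf{Q}$ and $\mathbf{C}$ all have probability $1-o(1)$ — these are the usual products $\prod_{i}(1-q^{i-n})$ — and that the entropy mass carried by the complementary events is $o(1)$ in $q$-ary units, which holds because entropies here are uniformly bounded by $\max\{mn,np,mp\}$ while the exceptional probabilities vanish. The only other non-routine ingredient is the torsor description of the fibre of the multiplication map over a full-rank product, which is what produces the $-n^2$ correction in the second case; everything else reduces to elementary linear algebra over $\mathbb{F}_q$.
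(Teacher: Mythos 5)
Your argument is correct, but there is nothing in the paper to compare it against: the paper does not prove this lemma at all — it is imported verbatim from the cited reference \cite{2019arXiv190806957J} with the phrase ``we mention a result from...''. So your proposal supplies a self-contained proof where the paper supplies only a citation. The substance of your argument is sound and is essentially the standard rank-counting route: the exact identity $H(\mathbf{PQ}\mid\mathbf{P}=P)=\mathrm{rank}(P)\cdot p$ (uniformity of a linear image plus equal fibre sizes) is right, as is the torsor description of the fibre over a rank-$n$ product, $\{P: PQ=C \text{ for some } Q\}=\{\mathbf{C}_0\mathbf{U}:\mathbf{U}\in GL_n(\mathbb{F}_q)\}$ with $\mathbf{Q}$ then uniquely determined, which is exactly what produces the $-n^2$. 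Two small points of polish. First, in the case $n\ge\min(m,p)$ your phrase ``$o(1)$-close to uniform ... forces $H\to mp$'' is slightly loose over a growing alphabet; the clean lower bound is simply $H(\mathbf{PQ})\ge H(\mathbf{PQ}\mid\mathbf{P})\to\min(mp,np)=mp$, which you already have from the conditional-entropy computation, so no separate closeness-to-uniformity argument is needed. Second, when you pass from the fibre count to $H(\mathbf{P}\mid\mathbf{PQ})\to n^2$ you are implicitly using that the conditional law of $(\mathbf{P},\mathbf{Q})$ given $\mathbf{PQ}=C$ is uniform on the fibre (true, since the joint law is uniform) and that the $o(1)$-probability non-full-rank events contribute $o(1)$ entropy because all entropies are bounded by $\max\{mn,np,mp\}$; you flag this bookkeeping at the end, and it does go through. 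Your alternative route via the $GL_m\times GL_p$ invariance of the law of $\mathbf{PQ}$ (so that $\mathbf{PQ}$ is uniform on its rank class, and one just counts rank-$n$ matrices) is an equally valid and arguably cleaner way to get the unconditional entropy.
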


A relaxed notion of security against the user can be shown to be satisfied under certain conditions. The coefficient of each power of $x$, where the coefficient of the $j^{th}$ power of $x$ is denoted by $\mathbf{C}_j$, is the sum of at least $K$ matrix products. Therefore, the coefficient of $x^4$, for example, can be written as:
\begin{align}
   \mathbf{C}_4 = \mathbf{A}_3\mathbf{S}_1 + \mathbf{R}_1\mathbf{S}_2 + \mathbf{R}_2 \mathbf{B}_1  = \left[ \begin{array}{ccc}
        \mathbf{A}_3  & \mathbf{R}_1 & \mathbf{R}_2 
     \end{array} \right] \left[ \begin{array}{c}
          \mathbf{S}_1 \\
           \mathbf{S}_2 \\
           \mathbf{B}_1 
     \end{array} \right],
\end{align}
which is a product of two matrices of dimensions $m \times n$ and $n \times p$, respectively. If $ n \geq \max (m,p)$, and matrices $\mathbf{A}$ and $\mathbf{B}$ are i.i.d. uniform, then from Lemma \ref{entropy_product} we have $H(\mathbf{A}_3\mathbf{S}_1 + \mathbf{R}_1\mathbf{S}_2 + \mathbf{R}_2 \mathbf{B}_1) = H(\mathbf{A}_3\mathbf{S}_1 + \mathbf{R}_1\mathbf{S}_2 + \mathbf{R}_2 \mathbf{B}_1 \mid \mathbf{A}, \mathbf{R}) = mp$ in q-ary units. Therefore, we have $I(\mathbf{A}, \mathbf{R}; \mathbf{A}_3\mathbf{S}_1 + \mathbf{R}_1\mathbf{S}_2 + \mathbf{R}_2 \mathbf{B}_1) = 0$. Similarly, we have $I(\mathbf{B}, \mathbf{S}; \mathbf{A}_3\mathbf{S}_1 + \mathbf{R}_1\mathbf{S}_2 + \mathbf{R}_2 \mathbf{B}_1) = 0$. Thus the coefficient of $x^4$, that is $\mathbf{C}_4$, is i.i.d. uniformly distributed and independent of the input matrices. Similarly, we have $I(\mathbf{A}, \mathbf{R}; \mathbf{C}_j) = 0$ and $I(\mathbf{B}, \mathbf{S}; \mathbf{C}_j) = 0$ for all $j = 1,\ldots, N-1$. On the other hand, if $p \leq n \leq m$, we have $I(\mathbf{A}, \mathbf{R}; \mathbf{A}_3\mathbf{S}_1 + \mathbf{R}_1\mathbf{S}_2 + \mathbf{R}_2 \mathbf{B}_1) = mp - np > 0$, and $I(\mathbf{B}, \mathbf{S}; \mathbf{A}_3\mathbf{S}_1 + \mathbf{R}_1\mathbf{S}_2 + \mathbf{R}_2 \mathbf{B}_1) = 0$, while if $m \leq n \leq p$, we have $I(\mathbf{A}, \mathbf{R}; \mathbf{A}_3\mathbf{S}_1 + \mathbf{R}_1\mathbf{S}_2 + \mathbf{R}_2 \mathbf{B}_1) = 0$, and $I(\mathbf{B}, \mathbf{S}; \mathbf{A}_3\mathbf{S}_1 + \mathbf{R}_1\mathbf{S}_2 + \mathbf{R}_2 \mathbf{B}_1) = mp - mn > 0$.

When $n < \min (m,p)$, or in general, when matrices $\mathbf{A}$ and $\mathbf{B}$ are not i.i.d. uniform randomly distributed, the user can infer partial information about both $\mathbf{A}$ and $\mathbf{B}$, and instead the following procedure can be implemented where the servers exchange shares of their computed results with each other in a secure way to discard the residual terms, thus ensuring security against the user.
The servers exchange shares of their results from the computation phase in a secure way, similarly to \cite{8437651}, so that each server ends up with a $(N,N-T,T)$ left-share of matrix $\mathbf{C}$, which can then be delivered to the user. To do this, 
\begin{itemize}
    \item Server $i$ generates $(N,N-T,T)$ left-shares of $[[\mathbf{C}]]_i$, evaluated on the $N^{th}$ roots of unity $\alpha_N^{j-1}, \forall j\in [N]$, with $\alpha_N$ being a primitive $N^{th}$ root of unity in $\mathbb{F}_q$, and enumerated as $[[\mathbf{C}]]_{i,j}^{L}=\left[\left[[[\mathbf{C}]]_i \right]\right]_j^L, \forall j\in [N]$.
    \item Server $i$ sends the left-share $[[\mathbf{C}]]_{i,j}^{L}$ to server $j$. The privacy requirement against the servers is satisfied, since any $T$ colluding servers cannot gain any information about server $i$'s share $[[\mathbf{C}]]_i$ from the left-shares received in the communication phase.
    \item Server $j$ averages the received left-shares $[[\mathbf{C}]]_{i,j}^{L}, \forall i\in [N]$, to obtain the $(N,N-T,T)$ left-share $[[\mathbf{C}]]^L_{j}$. 
\end{itemize}

To see the correctness of the above procedure, note that, for given $N,K$ and $T$ values, the secret sharing scheme is linear for both left and right shares; that is, $[[\mathbf{A}]] + [[\mathbf{B}]] = [[\mathbf{A}+\mathbf{B}]]$. Therefore, following from Eq. \eqref{constant_term}, we have
\begin{align}
\frac{1}{N}\sum_{i=1}^{N} [[ \mathbf{C} ]]_{i,j}^{L} &= \frac{1}{N} \sum_{i=1}^{N}\left[\left[[[\mathbf{C}]]_{i}\right]\right]_{j}^L\\
&=\left[\left[\mathbf{C}\right]\right]_j^L.
\end{align}

\subsection{Other performance metrics}
\subsubsection{Encoding complexity}\label{encoding complexity}
To compute $N$ evaluations of the matrix polynomial $\mathbf{A}(x)$ on the roots of unity, the source nodes perform $\frac{mn}{K}$ N-point FFTs, which involve $O(\frac{mnN\log N}{K})$ finite field operations in $\mathbb{F}_q$. For $T=0$, the complexity is $O(mn\log N)$, that is, it is a logarithmic rate of growth with respect to $N$. The analysis is similar for computing the evaluations of matrix polynomial $\mathbf{B}(x)$.

\subsubsection{Download cost}
Download cost is the normalized number of bits that need to be downloaded by the user from the servers to reconstruct the computation result. It is defined as
\begin{align}
    \chi_{DL}=\frac{\sum_{i=1}^{N} H([[\mathbf{C}]]_i)}{H( \mathbf{\mathbf{C}})}.
\end{align}

As shown in \cite{2019arXiv190806957J}, the download cost depends on the dimensions of the data matrices. 
We can compute the download cost of our scheme for three different cases.
\begin{itemize}
    \item If $\min(m,p) \leq  n \leq K\min(m,p)$, the download cost is
\begin{align}
    \chi_{DL}&=\frac{N(\frac{mn}{K}+\frac{np}{K}-\frac{n^2}{K^2})}{mp}\\
    &=\frac{Nn(m+p-\frac{n}{K})}{Kmp}.
\end{align}
Assuming $m=p$, we have $\chi_{DL}\approx\frac{2nN}{mK} < \frac{2N}{K}$.
\item If $n \geq K\min(m,p)$, the download cost is
\begin{align}
     \chi_{DL}&=\frac{Nmp}{mp}\\
    &=N.
\end{align}
\item If $n<\min(m,p)$, the download cost is 
\begin{align}
    \chi_{DL}&=\frac{N(\frac{mn}{K}+\frac{np}{K}-\frac{n^2}{K^2})}{mn+np-n^2}\\
    &= \frac{N(m+p-\frac{n}{K})}{K(m+p-n)}.
\end{align}
For $m,p \rightarrow \infty$, $\chi_{DL} \rightarrow \frac{N}{K}= \frac{N}{N-T}$. This is the optimal download cost.
\end{itemize}

\subsubsection{Decoding complexity} Since the decoding requires computing the sum of the received results, the decoding complexity of the proposed scheme is negligible.
This is an important advantage of the proposed scheme compared to existing polynomial coding schemes in the literature, which require polynomial interpolation.

\begin{table}[htbp]
    \centering
    \begin{tabular}{c|c|c|c|c} 
       \textbf{Scheme}  & \textbf{Upload cost} & \textbf{Download cost} & \textbf{Encoding complexity} & \textbf{Decoding complexity}  \\
       \hline
       Secure MatDot \cite{Aliasgari:ISIT:19} & $\frac{2N}{N-2T+1}$ & $ N = 2(K+T) - 1 $ & $O(\frac{mn}{K} N \log^2 N \log \log N)$ & $O(\frac{mp}{K} N\log^2 N \log \log N)$\\
       \hline
       GASP \cite{8849446}, Secure PolyDot \cite{Aliasgari:ISIT:19}, & & & \\ USCSA \cite{Kakar2019UplinkDownlinkTI}, Nodehi et al. \cite{8437651} & $O(\sqrt{N})$ & $ \frac{N}{N - O(T)}$ & $O(\frac{mn}{\sqrt{N}}N\log^{2}N \log\log N)$ & $O(\frac{mp}{\sqrt{N}} N \log^2 N \log \log N)$\\
       \hline 
        &  & $N = K+2T$, if $n\geq K \min(m,p)$ & \\
       Proposed scheme & $\frac{N}{N-2T}$ & $\frac{Nn(m+p-n/K)}{Kmp}$, if $n \geq \min(m,p)$ & $O(\frac{mnN}{N-2T}\log N)$ & $O(1)$ \\
       & & $\frac{N(m+p-n/K)}{K(m + p - n)}$, if $n \leq \min(m,p)$ &  &  \\
       \hline 
        Proposed scheme using 1 inter-& & & \\server comm. round (Sec. \ref{security_user}) & $\frac{N}{N-2T}$ & $\frac{N}{N-T}$ & $O(\frac{mnN}{N-2T}\log N)$ & $O(\frac{mp}{N-T}N\log N)$
    \end{tabular}
    \caption{Comparative summary of different schemes.}
        \label{table:comparative}
\end{table}

\subsection{Comparison with other schemes}
There exists a trade-off between the upload cost and the download cost depending on the kind of partitioning employed for matrix multiplication. In \cite{Kakar2018RateEfficiencyAS,Kakar2019UplinkDownlinkTI}, the trade-off within the class of schemes that employ row-by-column partitioning is considered. The other class of schemes that employ column-by-row partitioning, like MatDot and our scheme, provide different points on the upload cost-download cost trade-off. A comparative summary is described in Table \ref{table:comparative}. The GASP \cite{8849446}, secure PolyDot \cite{Aliasgari:ISIT:19}, and USCSA \cite{Kakar2019UplinkDownlinkTI} schemes, which employ row-by-column partitioning generally result in a higher upload cost and lower download cost than those employing column-by-row partitioning. Consider $K$ row-wise partitions of $\mathbf{A}$, and $K$ column-wise partitions of $\mathbf{B}$. For $T=0$, the matrices are encoded by evaluating their corresponding polynomials at $N=K^2$ distinct points in $\mathbb{F}_q$. SDMM schemes employing the row-by-column partitioning require $N=O(K^2)$ servers. Hence, the upload cost is $\chi_{UL}=\frac{N}{K}=O(\frac{N}{\sqrt{N}})=O(\sqrt{N})$. Since the complexity of evaluating a polynomial at $N$ points is $O(N\log^2 N\log\log N)$ \cite{BORODIN1974366}, the total encoding complexity is $O(\frac{mn}{\sqrt{N}}N \log^2 N\log\log N)\approx O(mn\sqrt{N}\log^{2}N \log\log N)$. This is significantly larger than the encoding complexity of our scheme, which grows only logarithmically with $N$.

The secure MatDot scheme of \cite{Aliasgari:ISIT:19}, which employs column-by-row partitioning, has an upload cost of $\chi_{UL}=\frac{2N}{N-2T+1}$, which is also larger than that of our scheme by a factor of two (see Fig. \ref{fig:plot}).

\begin{figure}[htbp]
    \centering
    \includegraphics[scale=0.8]{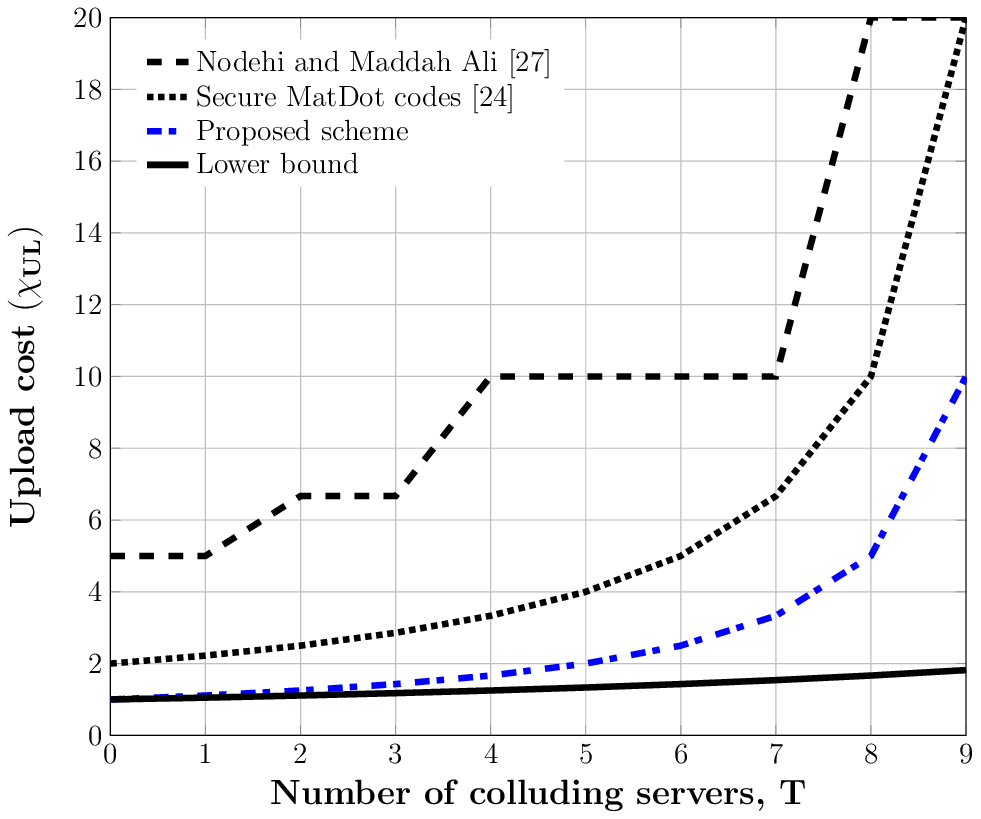}
    \caption{Comparison of the upload cost of SDMM of two matrices for different number of colluding servers out of a total of $N=20$ servers. Only the values of $T\leq 9$ are considered since the proposed scheme is defined only for $T< N/2$. The maximum value of $K$ is calculated for each $T$.
    }
    \label{fig:plot}
\end{figure}

Compared to \cite{8437651}, which also considers the general case with the source nodes separate from the user, our scheme provides significant improvements in terms of the upload cost. The upload cost achieved by the scheme in \cite{8437651} is given by $\chi_{UL}=\frac{\min(2K^2+2T-3,K^2 + KT+T-2)}{K}$, which is significantly higher than that provided by our scheme (see Fig. \ref{fig:plot}). We also highlight that, as opposed to the scheme in \cite{8437651}, our scheme does not require inter-server communication in the computation and communication phases for multiplying two matrices, which significantly reduces both the latency and the complexity.


\section{Straggler mitigation} \label{straggler_mitigation}
The FFT-based scheme in the previous sections does not provide robustness against straggling servers. We need all the evaluations from all the servers to recover the desired result. In this section we present an extension of our FFT-based scheme, which provides a certain level of robustness to straggling servers by incorporating both column-wise and row-wise partitioning of the matrices.

When providing straggler mitigation, typically the goal is to minimize the $\textit{recovery threshold}$, which refers to the minimum integer $r$ such that the computation result can be recovered from any $r$ responsive servers. Hence, instead of specifying the number of servers, and minimizing the upload cost, in this section, we consider arbitrary partitioning of the matrices, and identify the corresponding upload cost and the recovery threshold.

In particular, we employ row-wise partitioning of $\mathbf{A}$ (and column-wise for $\mathbf{B})$ to introduce straggler-robustness. That is, $\mathbf{A}$ is partitioned into a $K_2 \times K_1$ array of equal-sized blocks, and $\mathbf{B}$ is partitioned into a $K_1 \times K_3$ array of equal-sized blocks. The idea is that the secret shares of each row of $\mathbf{A}$ (and each column of $\mathbf{B}$) are generated, while polynomially-coded shares of each column of $\mathbf{A}$ (and each row of $\mathbf{B}$) are generated for straggler robustness. The proposed scheme provides a recovery threshold of $ N - \left( \lceil \frac{N}{K_1+2T} \rceil - K_2K_3\right)$ in general, and $ N - \left( \lceil \frac{N}{K_1+T} \rceil - K_2K_3\right)$ for computing with own data. 

The matrices $\mathbf{A}$ and $\mathbf{B}$ are partitioned into $K_1K_2$ and $K_2K_3$ blocks, denoted by $\mathbf{A}_{i,j}, i \in [K_2], j\in [K_1]$ and $\mathbf{B}_{j,k}, j \in [K_1], k\in [K_3]$, respectively. Matrices $\mathbf{R}_{i,j}, i \in [K_2], j \in [T]$, and $\mathbf{S}_{j,k}, j\in [T], k\in [K_3]$, are generated, whose entries are i.i.d. uniform random variables from $\mathbb{F}_q$. The matrices $\mathbf{R}$ and $\mathbf{S}$, defined as
\begin{align}
     \mathbf{R}&=\left[ \begin{array}{ccc}
         \mathbf{R}_{1,1}  & \cdots & \mathbf{R}_{1,T} \\
         \vdots  & \ddots & \vdots \\ 
           \mathbf{R}_{K_2,1}  & \cdots & \mathbf{R}_{K_2,T}\\
    \end{array} \right] ,\label{R_general}
    \end{align}
    \begin{align}
     \mathbf{S} &= \left[
     \begin{array}{ccc}
          \mathbf{S}_{1,1}   & \cdots & \mathbf{S}_{1,K_3} \\
         \vdots & \ddots  & \vdots \\
         \mathbf{S}_{T,1} & \cdots & \mathbf{S}_{T,K_3} \\
     \end{array}
     \right],\label{S_general}
\end{align}
are used as secret keys to encode the matrices $\mathbf{A}$ and $\mathbf{B}$, respectively, by first appending them in the following manner:
\begin{align}
    \Tilde{\mathbf{A}}&=\left[ \begin{array}{cc}
         \mathbf{A} & \mathbf{R}
    \end{array} \right] ,\label{A_general}\\
     \Tilde{\mathbf{B}} &= \left[
     \begin{array}{c}
         \mathbf{B}    \\
         \mathbf{S}
         \end{array}
     \right], \label{B_general}
\end{align}
and then constructing the following multivariate polynomials:
\begin{align}
   \mathbf{A}(x_1,x_2)&= \sum_{i=1}^{K_2} \sum_{j=1}^{K_1} \mathbf{A}_{i,j} x_2^{i-1} x_1^{j-1} + \sum_{i=1}^{K_2} \sum_{j=1}^{T} \mathbf{R}_{i,j} x_2^{i-1} x_1^{K_1+j-1} \label{A_multipartition}\\
  \mathbf{B}(x_1,x_2)&= \sum_{j=1}^{K_1} \sum_{k=1}^{K_3} \mathbf{B}_{j,k} x_1^{-(j-1)} x_2^{(k-1)K_2}  + \sum_{j=1}^{T} \sum_{k=1}^{K_3} \mathbf{S}_{j,k} x_1^{-(K_1+T+j-1)} x_2^{(k-1)K_2}.
\end{align}

The product polynomial $\mathbf{C}(x_1,x_2)$ is given by
\begin{align}
   \mathbf{C}(x_1,x_2)&= \sum_{i=1}^{K_2} \sum_{k=1}^{K_3} \left(\sum_{j=1}^{K_1}\mathbf{A}_{i,j}\mathbf{B}_{j,k} \right) x_2^{K_2(k-1)+i-1} + \text{(terms with non-zero powers of $x_1$)}.
\end{align}

\subsection{Sharing phase}
Let there be a total of $N\geq N_1N_2$ servers, where $N_1=K_1+2T$ and $N_2 \geq K_2K_3$. The user evaluates polynomials $\mathbf{A}(x_1,x_2)$ and $\mathbf{B}(x_1,x_2)$ at points $(x_1,x_2)=(\alpha_{N_1}^r, \beta_s), r=0,\ldots, N_1-1, s=0,\ldots, N_2 -1$, where $\alpha_{N_1}$ is a primitive $N_1^{th}$ root of unity in $\mathbb{F}_q$, and $\beta_s, \forall s\in [N_2]$ are distinct points in $\mathbb{F}_q$.

The user sends $\mathbf{A}(\alpha_{N_1}^r, \beta_s)$ and $\mathbf{B}(\alpha_{N_1}^r, \beta_s)$ to server $i$, where $i=(s-1)N_1+r-1,\forall  r\in [N_1],\forall s \in [N_2]$. 

\subsection{Computation phase}
 Server $i$, where $i=(s-1)N_1+r-1$, computes $\mathbf{C}(\alpha_{N_1}^r, \beta_s)=\mathbf{A}(\alpha_{N_1}^r, \beta_s)\mathbf{B}(\alpha_{N_1}^r, \beta_s)$, and sends this share to the user as soon as it is computed.
 
\subsection{Reconstruction phase}
The user collects and computes the averages of all the evaluations of $\mathbf{C}(x_1,x_2)$ on the points having the same $x_2$ coordinate, which removes the terms with non-zero exponents of $x_1$. Therefore, for every $x_2 \in [N_2]$, the user obtains
\begin{align}
    \hspace{-0.9cm}\mathbf{f}(x_2)&=\frac{1}{N_1} \sum_{i=0}^{N_1-1} \mathbf{C}(\alpha_{N_1}^i,x_2)\\ 
     &= \sum_{i=1}^{K_2} \sum_{k=1}^{K_3} \left(\sum_{j=1}^{K_1}\mathbf{A}_{i,j}\mathbf{B}_{j,k} \right)  x_2^{K_2(k-1)+i-1}.
\end{align}

The user then interpolates the polynomial $\mathbf{f}(x_2)$ from any $K_2K_3$ evaluations to obtain $\sum_{j=1}^{K_1}\mathbf{A}_{i,j}\mathbf{B}_{j,k} $ for all $(i,k)\in [K_2]\times [K_3]$, therefore obtaining the final result $\mathbf{C}=\mathbf{AB}$.

\subsection{Special case: computation with own data}
For the special case when the data matrices belong to the user, the encoding polynomials are constructed by the user as follows.
\begin{align}
     \mathbf{A}(x_1,x_2)&= \sum_{i=1}^{K_2} \sum_{j=1}^{K_1} \mathbf{A}_{i,j} x_2^{i-1} x_1^{j-1}  + \sum_{i=1}^{K_2} \sum_{j=1}^{T} \mathbf{R}_{i,j} x_2^{i-1} x_1^{K_1+j-1}\\
   \mathbf{B}(x_1,x_2)&= \sum_{j=1}^{K_1} \sum_{k=1}^{K_3} \mathbf{B}_{j,k} x_1^{-(j-1)} x_2^{K_2(k-1)}  + \sum_{j=1}^{T} \sum_{k=1}^{K_3} \mathbf{S}_{j,k} x_1^{-(K_1+j-1)} x_2^{K_2(k-1)}.
\end{align}

By this construction, the matrices $\mathbf{R}$ and $\mathbf{S}$ also appear in the constant coefficient of the polynomials $\mathbf{f}(x_2)$, and the user must pre-compute the products of the private random matrices in order to obtain the desired result.

\begin{remark} \label{group_rt_remark}

Note that, to reconstruct the desired result, the user requires all $N_1$ evaluations on the points $(x_1,x_2)$ for $x_1=1,\alpha_{N_1},\ldots, \alpha_{N_1}^{N_1-1}$ and constant $x_2$. In other words, for any $K_2K_3$ values of $s$, the results from all the groups of servers $[(s-1)N_1 : sN_1 -1]$ are necessary and sufficient for reconstructing the desired result. Therefore, this scheme provides a slightly weaker notion of `group-wise' recovery threshold, that is, the minimum integer $r$ such that any $r$ groups of servers as defined above are sufficient for recovering the computation result. This represents the best-case recovery threshold. The proposed scheme provides a group-recovery threshold of $K_2K_3(K_1+2T)$ in general, and $K_2K_3(K_1+T)$ for computation with own data. 
\end{remark}

\begin{remark}
If at least $ r = N - \left( \lceil \frac{N}{K_1+2T} \rceil - K_2K_3\right)$ number of servers are responsive, that is, there is at most one unresponsive server in each of $\left( \lceil \frac{N}{K_1+2T} \rceil - K_2K_3\right)$ groups of $K_1+2T$ servers, while all the remaining servers are responsive, then it is guaranteed that $K_2K_3$ groups of $K_1 + 2T$ servers, as defined in Remark \ref{group_rt_remark}, are responsive. Therefore, the recovery threshold is given by $r= N - \left( \lceil \frac{N}{K_1+2T} \rceil - K_2K_3 \right)$. However, this is the worst-case recovery threshold, and it is often likely that the required number of groups are responsive with a smaller total number of responsive servers. The average recovery threshold would lie between the group-recovery threshold and the recovery threshold. 
\end{remark}

\subsection{Performance Analysis:}
\begin{itemize}
    \item \textbf{Upload Cost:} We have $\chi_{UL}=\frac{N(K_3mn+K_2np)}{K_1K_2K_3(mn+np)}$.
    \item \textbf{Encoding complexity:} The complexity of encoding matrix $\mathbf{A}$ for $N$ servers is $O\left(\frac{mn}{K_1K_2}\left(N_1\log N_1\right)\left(N_2\log^2 N_2 \log\log N_2\right) \right)$. The encoding complexity for matrix $\mathbf{B}$ can be computed similarly.
    \item \textbf{Decoding complexity:} The decoding requires addition, which has negligible complexity compared to multiplication in a finite field, followed by interpolation of a polynomial with $K_2K_3$ terms, which has complexity $O(n\log^2 n \log\log n)$ for $n=K_2K_3$ \cite{BORODIN1974366}. 
    Note that the decoding complexity of secure PolyDot scheme in \cite{Aliasgari:ISIT:19} is given by the complexity of interpolating a polynomial with $K_2K_3(K_1+T)+K_2K_1 + K_2T-1$ terms, while our scheme requires interpolating a polynomial with $K_2K_3$ terms, which is an order of magnitude smaller.
    \item \textbf{Field size:} The field must have the $N_1^{th}$ roots of unity, and must be big enough to have $K_2K_3$ distinct elements. Therefore, $N_1\vert q-1$ and $q>K_2K_3$ must be satisfied. 
\end{itemize}


\section{Secure distributed multiplication of multiple matrices}\label{multiple_matrices}

In this section, we consider the multiplication of multiple matrices. Such computations, known as matrix chain multiplications, occur in many applications in signal processing, graph theory, and network analysis. We extend our scheme proposed in Section \ref{polymat_eg} to implement multiplication of multiple matrices, that is, given $\Gamma$ source nodes generating the matrices $\mathbf{A}^{(1)},\ldots, \mathbf{A}^{(\Gamma)}$, the user wants to obtain the product $\mathbf{C}=G(\mathbf{A}^{(1)},\ldots, \mathbf{A}^{(\Gamma)})=\mathbf{A}^{(1)}\cdots \mathbf{A}^{(\Gamma)}$ securely from distributed computation over $N$ available servers. A naive method would be for the servers to securely compute the multiplication of two matrices at a time using the proposed scheme in Section \ref{polymat_eg}, and send the results of the computations to the user so that he reconstructs the intermediate computation results, that is, the product of a subset of the matrices, before re-encoding the intermediate result and sending its secret shares to the servers to multiply with the next matrix. However, such a naive method incurs an unnecessary amount of communication cost between the user and the servers, and also requires the user to re-encode the intermediate matrices multiple times, thus increasing the latency of the computation. The naive method also leaks information about the intermediate computations to the user. Next, we propose a more efficient alternative, that also satisfies the user privacy constraint.

The following scheme proceeds iteratively in multiple rounds by obtaining the shares of $\mathbf{C}^{(\gamma)} \triangleq \mathbf{A}^{(1)}\cdots \mathbf{A}^{(\gamma)}$, denoted by $[[\mathbf{C}^{(\gamma)}]]_i \triangleq [[\mathbf{A}^{(1)}\cdots \mathbf{A}^{(\gamma)}]]_i, \forall i \in [N]$, in the $(\gamma-1)^{th}$ round, for $\gamma \in [\Gamma]$. 

\subsection{Sharing phase}
For $K=N-2T$, the $(N,N-2T,T)$ left-shares $[[\mathbf{A}^{(1)}]]_i^L$ and right-shares $[[\mathbf{A}^{(\gamma)}]]_i^{R}, \ \forall \gamma \in [2:\Gamma]$, are sent to server $i$.


\subsection{Computation phase} In the computation phase of the $\gamma^{th}$ round, where $\gamma \in [\Gamma-1]$, server $i\in [N]$ computes the shares $[[\mathbf{H}^{(\gamma + 1)}]]_i = [[\mathbf{C}^{(\gamma)}]]_i^L [[\mathbf{A}^{(\gamma + 1)}]]_i^R, \ \forall \gamma \in [\Gamma-1]$. The secret shares $[[\mathbf{H}^{(\gamma + 1)}]]_i$ are evaluations on the $N^{th}$ roots of unity of a polynomial $\mathbf{H}^{(\gamma+1)}(x)$, which is similar to the product polynomial obtained in Eq. \eqref{poly_C}, and whose constant term is the matrix $\mathbf{C}^{(\gamma+1)}=\mathbf{C}^{(\gamma)}\mathbf{A}^{(\gamma+1)}$, while the remaining terms are uniformly distributed random matrices. 

\subsection{Communication phase} In the communication phase, the servers exchange shares of their results from the computation phase in a secure way, similarly to \cite{8437651}, to convert their secret shares $[[\mathbf{H}^{(\gamma+1)}]]$ to $(N,N-2T,T)$ left-shares of matrix $\mathbf{C}^{(\gamma + 1)}$. To do this, 
\begin{itemize}
    \item Server $i$ generates $(N,N-2T,T)$ left-shares of $[[\mathbf{H}^{(\gamma + 1)}]]_i$, evaluated on the $N^{th}$ roots of unity $\alpha_N^{j-1}, \forall j\in [N]$, with $\alpha_N$ being a primitive $N^{th}$ root of unity in $\mathbb{F}_q$, and enumerated as $[[\mathbf{H}^{(\gamma + 1)}]]_{i,j}^{L}=\left[\left[[[\mathbf{H}^{(\gamma + 1)}]]_i \right]\right]_j^L, \forall j\in [N]$.
    \item Server $i$ sends the left-share $[[\mathbf{H}^{(\gamma + 1)}]]_{i,j}^{L}$ to server $j$. The privacy requirement against the servers is satisfied, since any $T$ colluding servers cannot gain any information about server $i$'s share $[[\mathbf{H}^{(\gamma + 1)}]]_i$ from the left-shares received in the communication phase.
    \item Server $j$ averages the received left-shares $[[\mathbf{H}^{(\gamma + 1)}]]_{i,j}^{L}, \forall i\in [N]$, to obtain the $(N,N-2T,T)$ left-share $[[\mathbf{C}^{(\gamma + 1)}]]^L_{j}$. 
\end{itemize} 

To see the correctness of the above procedure, note that, for given $N,K$ and $T$ values, the secret sharing scheme is linear for both left and right shares; that is, $[[\mathbf{A}]] + [[\mathbf{B}]] = [[\mathbf{A}+\mathbf{B}]]$. Therefore, following from Eq. \eqref{constant_term}, we have
\begin{align}
\frac{1}{N}\sum_{i=1}^{N} [[ \mathbf{H}^{(\gamma+1)} ]]_{i,j}^{L} &= \frac{1}{N} \sum_{i=1}^{N}\left[\left[[[\mathbf{H}^{(\gamma+1)}]]_{i}\right]\right]_{j}^L\\
&=\left[\left[\mathbf{C}^{(\gamma+1)}\right]\right]_j^L.
\end{align}



The scheme proceeds in a recursive manner, looping back to the computation phase for the $(\gamma+1)^{th}$ round.

\subsection{Reconstruction phase} At the end of the $(\Gamma-1)^{th}$ round, the servers have access to the secret shares of the matrix $\mathbf{C}^{(\Gamma)}=\mathbf{A}^{(1)}\cdots \mathbf{A}^{(\Gamma)}$. The servers send the shares $[[\mathbf{C}^{(\Gamma)}]]_i, \forall i \in [N]$, to the user, which then computes the average of these shares to obtain
\begin{align}\label{Gamma_matrices_result}
    \frac{1}{N} \sum_{i=1}^{N} [[\mathbf{C}^{(\Gamma)}]]_i=\mathbf{A}^{(1)}\cdots \mathbf{A}^{(\Gamma)}.
\end{align}

\subsection{Performance analysis}
\textbf{Upload cost:} Each server receives one $(N,N-2T,T)$ share of all $\mathbf{A}^{(\gamma)}, \gamma \in [\Gamma]$. Thus we have
\begin{align}
    \chi_{UL}&=\frac{\sum_{i=1}^{N}\sum_{\gamma=1}^{\Gamma} \frac{ H([[\mathbf{A}^{(\gamma)}]]_i)}{K}}{\sum_{\gamma=1}^{\Gamma}H(\mathbf{A}^{(\gamma)})} \\
    &=\frac{N}{N-2T}.
\end{align}

\textbf{Encoding complexity:} Since the encoding involves the computation of FFT, the complexity is $O(m_{\gamma}m'_{\gamma}\frac{N}{K}\log N)$ for encoding the matrix $\mathbf{A}^{(\gamma)}, \forall \gamma\in [\Gamma]$.

\textbf{Complexity of the communication phase:} The complexity of server $i$ generating $(N,N-2T,T)$ shares of $[[\mathbf{H}^{(\gamma + 1)}]]_i$ is $O(m_{\gamma}m'_{\gamma + 1}\frac{N}{K}\log N)$, while the complexity of server $j$ averaging the received shares from the other servers is ignored as it requires only addition. The cost of inter-server communication is $\frac{N-1}{N-2T}$ per server, since each server communicates $N-1$ shares to the remaining servers.

\textbf{Decoding complexity:} The partial decoding in the communication phase of each intermediate round, as well as the final decoding of the result after the completion of the $(\Gamma-1)^{th}$ round requires only the addition of the received results from all the servers. Therefore, the decoding complexity is negligible.

\textbf{Security against the user:} The scheme constructed in this section preserves security against the user, since the intermediate computation results are not communicated to the user in any form. In contrast, this security constraint is violated by the naive scheme described at the start of this section, because the user obtains the result of each intermediate computation.

\section{Secure matrix algebra}\label{matrix_algebra}
In this section, we describe algorithms for performing matrix operations besides matrix multiplication that are useful in matrix algebra. Some operations, like matrix inversion, can be reduced to matrix multiplication, and can be implemented with the SDMM schemes described in the preceding sections.
\subsection{Matrix addition and multiplication by a scalar}\label{S_addition}
Addition and scalar multiplication follow easily from the linear nature of the secret sharing scheme. We have $[[\mathbf{A}+\mathbf{B}]]=[[\mathbf{A}]]+[[\mathbf{B}]]$, and $[[c\mathbf{A}]]=c[[\mathbf{A}]]$, where $c \in \mathbb{F}_q$. For addition, the shares must both be either left-shares or right-shares.

\subsection{Changing from $(N,K_1,T_1)$ left-shares to  $(N,K_2,T_2)$ right-shares}\label{change_partition}

\begin{figure*}[htbp]
    \centering
    \includegraphics[scale=0.9]{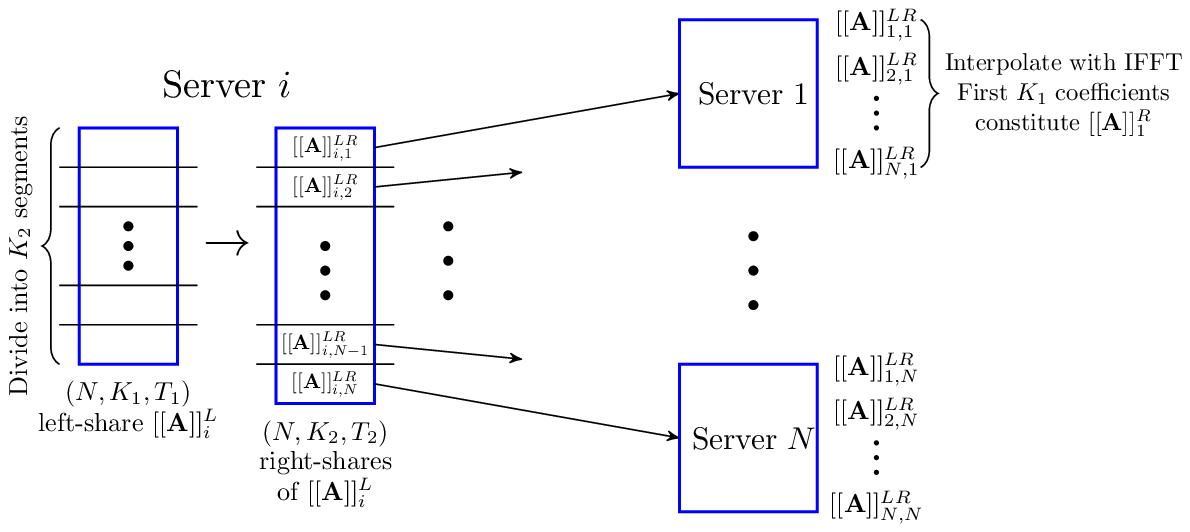}
    \caption{Conversion from left-share to right-share (right-share to left-share conversion can be done similarly).}
    \label{fig:share_conv}
\end{figure*}
Suppose the servers store $(N,K_1,T_1)$ left-shares of matrix $\mathbf{A}$. The goal is to let the servers obtain $(N,K_2,T_2)$ right-shares of matrix $\mathbf{A}$. A special case of this algorithm with $K_1=1$, $K_2=N-2T$ and $T_1=T_2=T$ was used earlier in the communication phase of Section \ref{multiple_matrices}. The procedure takes the following steps (see Fig. \ref{fig:share_conv}):
\begin{enumerate}
    \item Server $i$ generates $(N,K_2,T_2)$ right-shares of $[[\mathbf{A}]]^L_i$ evaluated on the $N^{th}$ roots of unity $\alpha_N^{j-1}, \forall j\in [N]$, and enumerated as $[[\mathbf{A}]]^{LR}_{i,j} = \left[\left[[[\mathbf{A}]]^{L}_{i}\right]\right]_{j}^{R}, \forall j\in [N]$.
    \item Server $i\in [N]$ sends the right-share $[[\mathbf{A}]]^{LR}_{i,j}$ to server $j$.
    \item Server $j$ interpolates the received shares $[[\mathbf{A}]]^{LR}_{i,j}, i= 1,\ldots, N$, using inverse FFT (IFFT) to obtain a polynomial whose first $K_1$ coefficients are $K_1$ column-wise partitions of the share $[[\mathbf{A}]]_{j}^{R}$. Stacking them column-wise gives the share $[[\mathbf{A}]]_{j}^{R}$.
\end{enumerate}

A procedure to convert right-shares to left-shares can be obtained similarly to the above procedure. For example, to convert right-shares to left-shares, step 1 of the above procedure generates $(N,K_2,T_2)$ left-shares of $[[\mathbf{A}]]_i^R$, enumerated as $[[\mathbf{A}]]^{RL}_{i,j} = \left[\left[[[\mathbf{A}]]^{R}_{i}\right]\right]_{j}^{L}, \forall j\in [N]$.

To see the correctness of the above procedure, note that IFFT of a sequence of $(N,K_1,T_1)$ left-shares $\left\{[[\mathbf{A}]]_1^L,\ldots,[[\mathbf{A}]]_N^L\right\}$ gives the sequence $\left\{\mathbf{A}_1,\ldots,\mathbf{A}_{K_1},\mathbf{R}_1,\ldots,\mathbf{R}_{T_1}\right\}$. Similarly, IFFT of a sequence of $(N,K_1,T_1)$ right-shares $\left\{[[\mathbf{A}]]_1^R,\ldots,[[\mathbf{A}]]_N^R\right\}$ gives the sequence $\left\{\mathbf{A}_1,\ldots,\mathbf{A}_{K_1},\mathbf{0}_1,\ldots, \mathbf{0}_{T_1},\mathbf{R}_1,\ldots, \mathbf{R}_{T_1}\right\}$. Since each coefficient of the IFFT sequence is a linear combination of the elements of the input sequence; for all $l\in [K_1]$, the $l^{th}$ coefficient of the IFFT, denoted by $IFFT_l(.)$, of the sequence of shares received by server $j$ is given by
\begin{align}
 IFFT_l \Bigl(& [[\mathbf{A}]]_{1,j}^{LR}, \ldots,  [[\mathbf{A}]]_{N,j}^{LR} \Bigr)\\
   &= IFFT_l \Bigl(\left[\left[ [[\mathbf{A}]]_1^{L} \right]\right]_j^R, \ldots, \left[\left[ [[\mathbf{A}]]_N^{L} \right]\right]_j^R \Bigr)\\
    &=\left[\left[ IFFT_l \left( [[\mathbf{A}]]_1^{L},\ldots, [[\mathbf{A}]]_N^{L} \right) \right]\right]_j^R\\
    & = \left[ \left[ \mathbf{A}_l \right]\right]_j^R,
\end{align}
where $\mathbf{A}_l, l\in [K_1]$ is the $l^{th}$ column-wise partition used for obtaining the original $(N,K_1,T_1)$ left-shares of matrix $\mathbf{A}$. Thus, server $j$ obtains the secret shares $[[\mathbf{A}_l]]_j^R, l =1,\ldots, K_1$.

\begin{remark}
For $K_1\geq 2$, left-shares cannot be directly converted to left-shares using the above procedure. Similarly, right-shares cannot be directly converted to right-shares. However, $(N,1,T_1)$ left-shares can be directly converted to both $(N,K_2,T_2)$ left-shares and right-shares using the above procedure. For converting $(N,1,T_1)$ left-shares to $(N,K_2,T_2)$ left-shares, step $1$ of the above procedure generates $(N,K_2,T_2)$ left-shares of $[[\mathbf{A}]]_i^L$, enumerated as $[[\mathbf{A}]]^{LL}_{i,j} = \left[\left[[[\mathbf{A}]]^{L}_{i}\right]\right]_{j}^{L}, \forall j\in [N]$.
\end{remark}

\subsection{Transpose of a matrix}\label{transpose}
Consider that server $i$ stores a $(N,K_1,T_1)$ left-share of $\mathbf{A}, i\in [N]$. The goal is to obtain a procedure, through which the servers end up with the $(N,K_2,T_2)$ left-shares of $\mathbf{A}^{tr}$ instead. The procedure takes the following steps:
\begin{enumerate}
    \item Server $i \in [N]$ performs the transpose operation on the left-share $[[\mathbf{A}]]^L_i$, to obtain a share $[[\mathbf{A}^{tr}]]_i=([[\mathbf{A}]]_i^L)^{tr}$. We have
    \begin{align}
        [[\mathbf{A}^{tr}]]_i&=([[\mathbf{A}]]_i^L)^{tr}\\
        &=  \sum_{l=1}^{K_1} \mathbf{A}^{tr}_{l}\alpha_N^{(i-1)(l-1)} + \sum_{l=1}^{T_1}\mathbf{R}^{tr}_{l} \alpha_N^{(i-1)(K+l-1)},
    \end{align}
    where $\mathbf{A}^{tr}_l, l=1,\ldots, K_1$ are equivalent to row-wise partitions of $\mathbf{A}^{tr}$. Note that $[[\mathbf{A}^{tr}]]_i$ is neither a left-share nor a right-share, but is equivalent to having a row-wise partitioning of $\mathbf{A}^{tr}$ employed in right-encoding, and having the exponents of the secret keys employed in left-encoding.
    \item Server $i$ then generates $(N,K_2,T_2)$ left-shares of $[[\mathbf{A}^{tr}]]_i$ evaluated on the $N^{th}$ roots of unity $\alpha_N^{j-1}, \forall j \in [N]$, and enumerated as $[[\mathbf{A}^{tr}]]^{L}_{i,j} = \bigl[\bigl[ [[\mathbf{A}^{tr}]]_{i}\bigr]\bigr]_{j}^{L}, \forall j\in [N]$.
    \item Server $i$ sends the left-share $[[\mathbf{A}^{tr}]]_{i,j}^{L}$ to server $j$.
    \item Server $j$ interpolates the received shares $[[\mathbf{A}^{tr}]]^{L}_{i,j}, i= 1,\ldots, N$, using IFFT to obtain the first $K_1$ coefficients, which are $K_1$ row-wise partitions of the share $[[\mathbf{A}^{tr}]]_{j}^{L}$. Stacking them row-wise gives the share $[[\mathbf{A}^{tr}]]_{j}^{L}$.
\end{enumerate}

\subsection{Exponentiation}
Suppose the user wants to compute $G(\mathbf{A})=\mathbf{A}^{r}, r\in \mathbb{N}$. $(N,N-2T,T)$ left-shares of matrix $\mathbf{A}$ are sent to the servers. The servers then implement the scheme to generate $(N,N-2T,T)$ right-shares of matrix $\mathbf{A}$ as described in Section \ref{change_partition}. Thus, the servers now have the shares of matrix $\mathbf{A}$ for left and right multiplication. The servers then implement the scheme for multiple matrix multiplication described in Section \ref{multiple_matrices}. If $r=2^n$, the computation can be done in $\log r$ rounds of computation and communication phases by computing $\mathbf{A}^2, \mathbf{A}^4, \mathbf{A}^8, \ldots, \mathbf{A}^r$ in successive rounds. If $r\neq 2^n$, then consider the binary expansion $\{a_{B-1},\ldots, a_0 \}$ of $r$, such that $r=\sum_{i=0}^{B-1} a_i 2^i$, where $B$ is the maximum number of bits required to represent $r$. Then, the desired result is given by
\begin{align}
    \mathbf{A}^r&= \mathbf{A}^{\sum_{i=0}^{B-1} a_i 2^i}\\
    &=\prod_{i=0}^{B-1} \mathbf{A}^{a_i 2^i}. \label{exp_general}
\end{align}
If the Hamming weight of the binary expansion of $r$ is $h$, then the computation requires $h-1$ extra rounds of computation and communication phases to perform the computation of Eq. \eqref{exp_general}, resulting in a total of $\log r + h -1$ rounds.

\textbf{Upload cost:} $(N,N-2T,T)$ left-shares of matrix $\mathbf{A}$ are uploaded by the source, while the $(N,N-2T,T)$ right-shares of matrix $\mathbf{A}$ are generated in-situ by the share-conversion algorithm, therefore not required to be sent by the source. Therefore, we have $\chi_{UL}=\frac{N}{N-2T}$.

\subsection{Solving the linear system $\mathbf{AX}=\mathbf{B}$ with secure Gaussian elimination}\label{inverse}

The linear system $\mathbf{AX}=\mathbf{B}$, where the elements of $\mathbf{A}$ and $\mathbf{B}$ belong to $\mathbb{F}_q$, can be solved by performing Gaussian elimination (GE) on the augmented matrix $(\mathbf{A}\vert \mathbf{B})$. Setting $\mathbf{B}$ equal to the identity matrix, the solution of the linear system also gives the matrix inverse $\mathbf{A}^{-1}$, if it exists. The GE method performs elementary row operations on the augmented matrix of the linear system, and row interchanges, also called pivoting, to transform the linear system into its row-echelon form. A scheme for secure GE is described in \cite{Bouman2018NewPF}, which takes element-wise secret shares of matrices $\mathbf{A}$ and $\mathbf{B}$ as inputs, and outputs the solution of the linear system. Computing element-wise secret shares of a matrix is equivalent to computing $(N,1,T)$ secret shares of matrices $\mathbf{A}$ and $\mathbf{B}$. Therefore, if the servers store $(N,K,T)$ secret shares of matrix $\mathbf{A}$ and $\mathbf{B}$, they must first be converted to $(N,1,T)$ secret shares using the scheme in Section \ref{change_partition}.

\subsection{Secure matrix inversion}
Besides solving the linear system $\mathbf{AX}=\mathbf{I}$ to compute the matrix inverse, as described in the previous section, a different procedure, inspired from that in \cite{10.1007/3-540-44647-8_7}, is described below, through which the servers start from $(N,K,T)$ right-shares of a square matrix $\mathbf{A}\in \mathbb{F}_q^{m\times m}$, and end up with $(N,K,T)$ left-shares of $\mathbf{A}^{-1}$ instead.
\begin{enumerate}
    \item A uniformly distributed random matrix $\mathbf{\Phi}\in \mathbb{F}_q^{m\times m}$, which is used as the secret key, is secretly shared with the servers. We assume that there is no central entity that can generate and share the secret key securely with the servers. The shares of the secret key are generated by the servers in a decentralized manner as follows: For $i\in [N]$, server $i$ generates a random matrix $\mathbf{\Phi}^{(i)}\in \mathbb{F}_q^{m\times m}$, and generates its $(N,K,T)$ left-shares evaluated on the $N^{th}$ roots of unity $\alpha_N^{j-1}, \forall j\in [N]$. Server $i$ sends the left-share $[[\mathbf{\Phi}^{(i)}]]^L_j$ to server $j \in [N]\setminus \{ i \}$. From the left-shares received by server $j$, it computes the left-share $[[\mathbf{\Phi}]]^L_j=[[\mathbf{\Phi}^{(1)} + \cdots + \mathbf{\Phi}^{(N)}]]^L_j=[[\mathbf{\Phi}^{(1)}]]^L_j+\cdots + [[\mathbf{\Phi}^{(N)}]]^L_j$. Thus, each server obtains a left-share of a common secret key $\mathbf{\Phi}=\mathbf{\Phi}^{(1)} + \cdots + \mathbf{\Phi}^{(N)}$. 
    \item Server $j$ securely computes $[[\mathbf{P}]]_j=[[\mathbf{\Phi A}]]_j=[[\mathbf{\Phi}]]^L_j[[\mathbf{A}]]^R_j$.
    \item The servers reconstruct matrix $\mathbf{P}$ from its secret shares by exchanging their secret shares of matrix $\mathbf{P}$ with every other server, and averaging the received shares, similar to Eq. \eqref{constant_term}. Thus, each server obtains the public matrix $\mathbf{P}=\mathbf{\Phi A}$. The servers gain no information of the matrix $\mathbf{A}$ from the matrix $\mathbf{P}$, therefore satisfying the privacy constraint against the servers.
    \item Each server computes the matrix inverse $\mathbf{P}^{-1}=\mathbf{A}^{-1}\mathbf{\Phi}^{-1}$.
    \item For $j=1,\ldots, N$, server $j$ then obtains a left-share of the inverse of matrix $\mathbf{A}$ as follows: $[[\mathbf{A}^{-1}]]_j^L=\mathbf{P}^{-1}[[\mathbf{\Phi}]]_j^L=[[\mathbf{P}^{-1}\mathbf{\Phi}]]_j^L$.
    \item The servers can now perform further computation on the left-shares obtained, or deliver their left-shares to the user, who then performs IFFT on the received results to obtain the matrix $\mathbf{A}^{-1}$.
\end{enumerate} 
\textbf{Upload cost:} $(N,N-2T,T)$ shares of matrix $\mathbf{A}$ are uploaded by the source for Step 2, where SDMM of random matrix $\mathbf{\Phi}$ and input matrix $\mathbf{A}$ is performed, thus incurring an upload cost of $\chi_{UL}=\frac{N}{N-2T}$.

\subsection{Iterative matrix inversion}
The method for secure matrix inversion introduced above includes an intermediate step (step 4) that involves the inversion of a secure full-size matrix at each server. While the procedure satisfies the privacy requirements, it may contradict with the motivation of distributed computation. Iterative matrix inversion algorithms, for example Newton's method \cite{1328724}, do not involve direct matrix inversions, but instead proceed with matrix multiplications, and therefore, are amenable to efficient distributed implementation. Newton's method, however, provides only an approximation of the matrix inverse. Newton's method for inverting matrices is derived from Newton's method for finding the root of a function. The procedure is as follows \cite{1328724}:
\begin{itemize}
    \item Set $f(\mathbf{X})=\mathbf{A}-\mathbf{X}^{-1}$. Note that the root of $f(\mathbf{X})$ is $\mathbf{A}^{-1}$. Apply Newton's method for finding its root, as follows:
    \begin{align}
        \mathbf{X}_{i+1}=\mu_i \mathbf{X}_i \left( 2\mathbf{I}-\mathbf{A}\mathbf{X}_i \right),
    \end{align}
    where $\mu_i=1$ for all $i>0$. 
    \item \textbf{Choosing the initial estimate $\mathbf{X}_0$:} Quadratic convergence is obtained if $\vert \vert \mathbf{A}\mathbf{X}_0 - \mathbf{I} \vert \vert < 1$. This is satisfied if $\mathbf{X}_0 = \mu_o \mathbf{A}^T$ is picked as the initial estimate, with the value of $\mu_0$ as proposed in \cite{1328724}.
\end{itemize}
The matrix addition and multiplication operations can be performed securely in a distributed manner using the algorithms described in this paper.

\subsection{Computation of arbitrary polynomials:}
The algorithms for securely performing matrix addition, transpose, exponentiation, inverse and multiplication that have been described in this paper allow the user to compute arbitrary matrix polynomials on distributed servers. For example, a function of the following form,
\begin{align}
    G(\mathbf{A}_1,\mathbf{A}_2,\mathbf{A}_3)= \mathbf{A}_1^2\mathbf{A}_2+c\mathbf{A}^{-1}_3,
\end{align}
can be computed securely on distributed servers as follows: first, the servers securely compute $\mathbf{A}_1^2\mathbf{A}_2$ using the scheme for multiple matrix multiplication; then, the servers add the shares $[[\mathbf{A}_1^2\mathbf{A}_2]]$ to the shares $[[c\mathbf{A}^{-1}_3]]$, computed using one of the secure matrix inversion methods described, to finally obtain the secret shares of the final result, which the user receives and decodes to obtain the desired result.

\subsection{Secure learning from local datasets} 
Consider that data from $D$ source nodes, each with a different size of dataset, is used for training a fully connected deep neural network. In a fully connected neural network, the input layer of neurons performs the matrix multiplication $\mathbf{WX}$, where $\mathbf{W}$ is the weight matrix associated with the layer of neurons, and $\mathbf{X}=(\mathbf{X}_1, \ldots, \mathbf{X}_D)$, where $\mathbf{X}_i, i\in [D]$ is the dataset belonging to source node $i$. SDMM schemes in \cite{2018arXiv180600469C,8849446,Kakar2019UplinkDownlinkTI,8437651}, based on column-wise partitioning of matrix $\mathbf{X}$, code across different data points, thus requiring the local datasets of the sources to be encoded at a central location, which leads to a privacy concern. In contrast, the SDMM algorithm proposed in this paper, based on row-wise partitioning of matrix $\mathbf{X}$, encodes each dataset independently. We have
\begin{align}
    [[\mathbf{X}]]^{R} = \left( [[\mathbf{X}_1]]^R, \ldots, [[\mathbf{X}_D]]^R \right),\label{datasets}
\end{align}
that is, each source can deliver the right shares of its dataset to the servers independent of other sources. 

\textbf{Linear regression-} Consider the computation of the MMSE estimate in a linear regression problem, where the optimum estimate is given by $\beta = \left( \mathbf{X}^{tr}\mathbf{X}\right)^{-1} \mathbf{X}^{tr}\mathbf{Y}$. The local datasets $(\mathbf{X}_i,\mathbf{Y}_i)$ can be delivered to the servers similarly to Eq. \eqref{datasets}, and then the left-shares and right-shares of $\mathbf{X}^{tr}$ and $\mathbf{Y}$ can be generated by using the algorithms to convert shares and perform secure transpose described in this section.

\subsection{Achieving the optimal upload and download cost for SDMM}
If we assume that the cost of inter-server communication is negligible compared to the upload and download costs, optimal communication costs can be achieved simultaneously for both upload and download while ignoring the inter-server communication costs. The source nodes upload $(N,N-T,T)$ shares of the input matrices to the servers, resulting in an upload cost of $\chi_{UL}=\frac{N}{N-T}$, which is shown to be a lower bound on the upload cost for SDMM in \cite{Kakar2019UplinkDownlinkTI}. Using the algorithm proposed in Section \ref{change_partition}, the $(N,N-T,T)$ shares can then be converted to $(N,N-2T,T)$ shares for implementing the SDMM algorithm proposed in this paper. The secret shares of the computation results are then converted to $(N,N-T,T)$ shares using the algorithm in Sections \ref{security_user} and \ref{change_partition}. These shares are sent to the user, resulting in a download cost of $\chi_{DL}=\frac{N}{N-T}$, which is known to be the optimal download cost for SDMM \cite{2018arXiv180600469C}. This procedure circumvents the trade-off between the upload cost and the download cost for SDMM schemes, studied earlier in \cite{Kakar2019UplinkDownlinkTI}.

The assumption that the cost of inter-server communication is negligible is justified in many practical scenarios involving computing clusters, where the computing servers are connected with high-speed communication links, while the links between the source nodes and the servers, and between the user and the servers may have limited bandwidth. However, when the inter-server communication costs (delay, bandwidth and/or energy) are non-negligible, the extra rounds of communication among the servers, required for various share conversions, become prohibitive.

\section{Conclusion and Discussion}
In this paper we developed a novel polynomial coded computation scheme achieving a near-optimal performance in terms of the upload cost for SDMM across $N$ servers, any $T$ of which may collude. We also proposed a scheme achieving the optimal upload cost for the special case when the user requesting the computation is also the source of the matrices to be computed upon. The scheme involves evaluating the constructed polynomials at the roots of unity in an appropriate finite field, which is equivalent to taking the discrete Fourier transform of the constituent matrices. The encoding and decoding complexity is also lower than all the other schemes in the literature. For a special case of the data matrices having certain asymptotic dimensions, our scheme also achieves the optimal download cost. We also introduced a method for straggler mitigation, which provides group-wise tolerance to straggling servers. Straggler tolerance is achieved at the expense of an increase in the upload cost. We further extended our scheme to implement multiplication of multiple matrices, while keeping the input matrices and all the intermediate computations secure against any $T$ colluding servers, with a minimal upload cost. This presents a substantial improvement in performance in terms of the upload cost for multiplication of multiple matrices over existing schemes in the literature. Moreover, we described procedures for other common matrix operations, some of which can be reduced to a set of matrix multiplications, thus allowing us to compute arbitrary matrix polynomials.

For future work, methods for securely performing other matrix operations, such as matrix decompositions, on distributed servers will be explored. It would also be interesting to develop schemes for multiplication of arbitrary number of matrices with minimum inter-server communication. From a more practical perspective, an interesting problem to look at is the simultaneous scheduling of the computation and communication phases to minimize the overall latency.

\bibliography{refer}
\bibliographystyle{IEEEtran}

\end{document}